\newenvironment{fminipage}%
{\begin{Sbox}\begin{minipage}}%
		{\end{minipage}\end{Sbox}\fbox{\TheSbox}}
\def\eps{\epsilon}
\def\abs#1{\left|#1  \right|}
\def\trace#1{\mathrm{Tr} \left(#1 \right)}
\def\norm#1{\| #1 \|}
\def\calG{\mathcal{G}}
\def\calL{\mathcal{L}}
\newcommand\bb{\boldsymbol{\mathit{b}}}
\newcommand\ee{\boldsymbol{\mathit{e}}}
\newcommand\qq{\boldsymbol{\mathit{q}}}
\newcommand\uu{\boldsymbol{\mathit{u}}}
\newcommand\vv{\boldsymbol{\mathit{v}}}
\newcommand\yy{\boldsymbol{\mathit{y}}}
\newcommand\zz{\boldsymbol{\mathit{z}}}
\newcommand\xx{\boldsymbol{\mathit{x}}}
\renewcommand\AA{\boldsymbol{\mathit{A}}}
\newcommand\BB{\boldsymbol{\mathit{B}}}
\newcommand\II{\boldsymbol{\mathit{I}}}
\newcommand\JJ{\boldsymbol{\mathit{J}}}
\newcommand\LL{\boldsymbol{\mathit{L}}}
\newcommand\QQ{\boldsymbol{\mathit{Q}}}
\renewcommand\SS{\boldsymbol{\mathit{S}}}
\newcommand\WW{\boldsymbol{\mathit{W}}}
\newcommand\ZZtil{\boldsymbol{\mathit{\tilde{Z}}}}
\newcommand\ZZ{\boldsymbol{\mathit{Z}}}
\newcommand\Otil{\widetilde{O}}
\newcommand{\one}{\mathbf{1}}
\newcommand{\kh}[1]{\left(#1\right)}
\begin{document}
\title{Diagonal of Pseudoinverse of  Graph Laplacian: Fast Estimation and Exact Results}
\author{Zenan~Lu} 
\author{Wanyue~Xu} 
\author{Zhongzhi~Zhang} \email{zhangzz@fudan.edu.cn}
\affiliation{Shanghai Key Laboratory of Intelligent Information
	Processing \& Shanghai Engineering Research Institute of Blockchain, School of Computer Science, Fudan University, Shanghai 200433, China}

\shortauthors{Z. Lu, W. Xu and Z. Zhang}

\keywords{Graph Laplacian,  pseudoinverse of graph Laplacian, graph and data mining, Kirchhoff index, Laplacian solver,node importance}

\begin{abstract}
The diagonal entries of pseudoinverse of the Laplacian matrix of a graph appear in many important practical applications, since they contain much information of the graph and many relevant quantities can be expressed in terms of them, such as Kirchhoff index and current flow centrality. However, a na\"{\i}ve approach for computing the diagonal of a matrix inverse has cubic computational complexity in terms of the matrix dimension, which is not acceptable for large graphs with millions of nodes. Thus,  rigorous solutions to the diagonal of the Laplacian  matrices for general graphs, even for particluar graphs are much less. In this paper, we propose a theoretically guaranteed estimation algorithm, which approximates all diagonal entries of the pseudoinverse  of a graph Laplacian in nearly linear time with respect to the number of edges in the graph. We execute extensive experiments on real-life networks, which indicate that our algorithm is both efficient and accurate. Also, we determine exact expressions for the diagonal elements of pseudoinverse of the Laplacian matrices for Koch networks and uniform recursive trees, and compare them with those obtained by our approximation algorithm. Finally, we use our algorithm to evaluate the Kirchhoff index of three deterministic model networks, for which the Kirchhoff index can be rigorously determined. These results further show the effectiveness and efficiency of our algorithm.
\end{abstract}

\maketitle

\section{Introduction}\label{sec:introduction}

As a typical representation of a graph, the Laplacian matrix  $\LL$ encapsulates much useful structural and dynamical information of the graph~\cite{Me94}. In addition to $\LL$ itself, its pseudoinverse $\LL^{\dagger}$ is also a powerful tool in network science~\cite{Ki18}, which arises in various aspects, such as random walks~\cite{SaMo11} and electrical networks~\cite{DoSiBu18}. Particularly, the diagonal entries of $\LL^{\dagger}$ appear frequently in diverse applications, for example, node centrality from both structural~\cite{EsHa10,RaZh11,RaZh13,VaDeCe17} and dynamical~\cite{SiBaBoMo16,SiBoBaMo18} perspectives. Moreover, a lot of other interesting quantities of a graph are also encoded in the diagonal entries of $\LL^{\dagger}$. For example, the sum of diagonal entries of $\LL^{\dagger}$ (trace of $\LL^{\dagger}$) is in fact the Kirchhoff index~\cite{KlRa93}, an invariant of a graph, which has found wide applications. First, it can serve as measures of the overall connectedness of a network~\cite{TiLe10},  the edge importance of complex networks~\cite{LiZh18}, and the robustness of first-order noisy networks~\cite{PaBa14} that has attracted much attension from the cybernetics community~\cite{ShCaHu18,SuYeQiCaCh19,SuLiZe20}. Besides, the popular current flow centrality~\cite{FiLe16,LiPeShYiZh19} or information centrality~\cite{PoYoScLe16} can also be represented in terms of the diagonal entries of $\LL^{\dagger}$.

In order to achieve better effects of the applications for the diagonal entries of $\LL^{\dagger}$ for a graph with $N$ nodes, the first step is to compute or evaluate the diagonal of $\LL^{\dagger}$. A straightforward computation of $\LL^{\dagger}$ involves inverting a perturbed Laplacian matrix $\LL+\JJ$, where $\JJ$ is the $N\times N$ matrix with every entry being $1$~\cite{GhBoSa08}, which costs $O(N^3)$ operations and $O(N^2)$ memories and thus is prohibitive for relatively large graphs with millions of nodes. In~\cite{RaZhBo14}, an incremental approach was proposed to compute $\LL^{\dagger}$. Although for general cases, this method performs better than the standard approach, for the worst case its computation cost is still $O(N^3)$. To reduce the computational complexity, some approximation algorithms were designed to estimate $\LL^{\dagger}$~\cite{BoFr12} or diagonal of a more general matrix~\cite{TaSa11,TaSa12,WuLaKaStGa16}, which have low complexity but no approximation guarantee. Thus, a theoretically guaranteed estimation algorithm for approximating $\LL^{\dagger}$ is imperative.

As shown above, many interesting quantities, such as Kirchhoff index, are closely related to the diagonal elements of matrix $\LL^{\dagger}$ of a graph $\calG$, the behaviors of which characterize various dynamical processes defined on $\calG$, including random walks, first-order noisy consensus~\cite{PaBa14,QiZhYiLi18,YiZhSt19}, and so on~\cite{Ne03}. However, the heavy demand on time and computer memory for inverting a matrix make it almost impossible to obtain exact results for the diagonal of $\LL^{\dagger}$ for a general large graph. It is thus of significant importance to seek for some particular graphs with some remarkable properties observed for real-world networks, the diagonal elements of $\LL^{\dagger}$ for which can be determined exactly by using computationally cheaper approaches. In addition to uncover the dependence of these primary quantities (e.g., Kirchhoff index) on the system size, rigorous results are also very useful as benchmark results for testing heuristic algorithms for computing the diagonal elements of a matrix inversion. Unfortunately, to the best of our knowledge, exact results for the diagonal of matrix $\LL^{\dagger}$ for scale-free small-world graphs is lacking, in spite of the fact that scale-free and small-world properties are ubiquitous in real networked systems~\cite{Ne03}.

In this paper, we focus on approximation method for computing the diagonal of the pseudoinverse $\LL^{\dagger}$ of a general graph, as well as exact results for diagonal of $\LL^{\dagger}$ for two special graphs. Our main contributions are as follows.

\begin{itemize}
	\item We introduce an approximation algorithm to compute the diagonal of the pseudoinverse $\LL^{\dagger}$ of the Laplacian matrix $\LL$ of a graph with $N$ nodes and $M$ edges. 
	It returns an approximation of the actual diagonal entries of $\LL^{\dagger}$ in nearly linear time with respect to $M$.  Moreover, we prove that our approach has a guarantee for  error bounds with a high probability.
	
	\item We perform extensive experiments on real-world networks, the results of which show that compared to the direct approach for inverting matrix, the proposed approximation method is both effective and efficient.
	
	\item We derive exact expressions for the diagonal elements of $\LL^{\dagger}$ for two deterministic networks, the scale-free small-world Koch networks~\cite{ZhZhXiChLiGu09} and the uniform recursive trees~\cite{LiZh13}, and compare them with the results obtained by our approximation algorithm.
	
	\item We apply our algorithm to estimate the Kirchhoff index of three networks, Koch networks, uniform recursive trees, and pseudofractal scale-free web~\cite{DoGoMe02,ShLiZh17}, the Kirchhoff index of which can be explicitly determined. These further demonstrate the efficiency and accuracy of our algorithm.
\end{itemize}

\section{Preliminaries}\label{}
In this section, we briefly introduce some basic concepts about graphs, including spanning  rooted forest, electrical network, resistance distance, Laplacian matrix and its pseudoinverse, spectral properties of Laplacian and its pseudoinverse, interpretations  of pseudoinverse for graph Lapcacian.

\subsection{Graphs and Spanning  Rooted Forests}

Let $\calG=(V,E,w)$ denote a connected undirected weighted graph or network,  where $V$ is the set of nodes (vertices)  $E$ is the set of edges (links), and  $w: E\to \mathbb{R}_{+}$ is the positive edge weight function, with $w_e$ being the weight for edge $e$. Then, there are total $N=|V|$ vertices and $M=|E|$ edges in graph $\calG$. We use $u \sim v$ to indicate that two vertices $u$ and $v$ are connected by an edge. Let $w_{\max}$ and $w_{\min}$ denote the maximum edge weight and minimum edge weight, respectively. Namely, $w_{\max}=\max_{e\in E} w_e $ and $w_{\min}=\min_{e\in E} w_e$.

For a graph  $\calG=(V,E,w)$,  a subgraph  $\mathcal{H}$ of  $\calG$  is a graph,  the node and edge sets of which are subsets of $V$ and  $E$, respectively. The product of the weights of the edges in  $\mathcal{H}$ is called the weight of  $\mathcal{H}$, denoted by  $ \varepsilon(\mathcal{H})$. The weight of a subgraph $\mathcal{H}$  with no edges is set to be 1. For any nonempty set $S$ of subgraphs, its weight is  defined as  $\varepsilon(S) =\sum_{\mathcal{H} \in S}  \varepsilon(\mathcal{H})$. The weight of the empty set is set to be zero~\cite{ChSh98}. A spanning subgraph $\calG'$  of  $\calG$ is a subgraph of  $\calG$ with the same node set $V$ and an edge set $E' \subseteq  E$.  A spanning tree  of $\calG$  is a  spanning subgraph of  $\calG$ that is a tree. A spanning forest on $\calG$  is a spanning subgraph  of $\calG$ that is a disjoint union of trees.  Here an isolated vertex is considered as a tree. A spanning rooted forest of  $\calG$ is a  spanning forest of  $\calG$ with a particular node (a root) marked in each tree. Let $\mathcal{F}_x$, $x \geq 1$, be the set of spanning rooted forests of $\calG=(V,E,w)$,  with each spanning rooted forest having  exactly $x$ trees.  Let $\mathcal{F}_x^{ij}$ be the set of spanning rooted forests of  $\calG=(V,E,w)$ with $x$ trees,  one of which is rooted at  $i$ and contains $j$.  And let $\mathcal{F}_x^{ii}$ be the set of spanning rooted forests  of  $\calG=(V,E,w)$ containing $x$ trees,  with one tree including $i$ and being rooted at  $i$.

\subsection{Electrical Networks and Resistance Distances}

For an arbitrary  undirected weighted graph  $\calG=(V,E,w)$, we can define an electrical network $\bar{\calG}=(V,E,r)$, which is obtained from $\calG$  by looking upon edges as resistors and considering vertices as junctions between resistors~\cite{DoSn84}, with the resistor of an associated  edge $e$ being $r_e=w_e^{-1}$.  For graph  $\calG$, the resistance distance $r_{ij}$ between two vertices $i$ and $j$  is defined as the effective resistance between $i$ and $j$ in the corresponding  electrical network $\bar{\calG}$~\cite{KlRa93}, which is equal to the potential difference between $i$ and $j$ when a unit current enters one vertex and leaves the other one. The  resistance distances of a graph have many interesting properties. For example, they obey the following Foster's theorem \cite{Te91}.
\begin{lemma}\label{Foster}
Let $\calG=(V,E,w)$ be a simple connected graph with $N$ nodes. Then the sum of  weight times resistance distance over all pairs of adjacent vertices in  $\calG$  satisfies
    \begin{equation*}
        \sum_{i \sim j}w_{ij} \times r_{ij}=N-1.
    \end{equation*}
\end{lemma}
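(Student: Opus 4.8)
The plan is to convert the sum of weighted resistance distances into a single matrix trace and then evaluate that trace spectrally. The bridge I need is the standard quadratic-form expression for the resistance distance in terms of the Laplacian pseudoinverse, namely $r_{ij} = (\ee_i-\ee_j)^\top \LL^\dagger (\ee_i-\ee_j)$, where $\ee_i$ is the $i$-th standard basis vector. This is exactly the physical definition given above: injecting a unit current at $i$ and extracting it at $j$ corresponds to the source vector $\ee_i-\ee_j$, so the potential vector $\pphi$ solving $\LL\pphi = \ee_i-\ee_j$ is $\pphi = \LL^\dagger(\ee_i-\ee_j)$ (up to an additive multiple of $\one$, which is killed because $(\ee_i-\ee_j)^\top\one=0$), and the potential difference $\phi_i-\phi_j$ is precisely the quadratic form above.

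Next I would write each scalar $r_{ij}$ as its own trace and push the sum inside using linearity and the cyclic property of the trace:
\begin{align*}
\sum_{i\sim j} w_{ij}\, r_{ij}
&= \sum_{i\sim j} w_{ij}\, \tr{(\ee_i-\ee_j)^\top \LL^\dagger (\ee_i-\ee_j)}\\
&= \sum_{i\sim j} w_{ij}\, \tr{\LL^\dagger (\ee_i-\ee_j)(\ee_i-\ee_j)^\top}\\
&= \tr{\LL^\dagger \Big(\sum_{i\sim j} w_{ij}\,(\ee_i-\ee_j)(\ee_i-\ee_j)^\top\Big)}.
\end{align*}
The key observation is that the weighted sum of rank-one outer products in the parentheses is the Laplacian itself, $\LL = \sum_{i\sim j} w_{ij}\,(\ee_i-\ee_j)(\ee_i-\ee_j)^\top$ (equivalently $\LL = B^\top W B$ for the incidence matrix $B$ and the diagonal weight matrix $W$). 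Hence the entire expression collapses to $\tr{\LL^\dagger \LL}$.

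Finally I would evaluate $\tr{\LL^\dagger \LL}$. Since $\calG$ is connected, $\LL$ is symmetric positive semidefinite with a one-dimensional kernel spanned by $\one$, so it has exactly $N-1$ nonzero eigenvalues. The pseudoinverse inverts the nonzero eigenvalues and annihilates the zero one, so $\LL^\dagger \LL$ is the orthogonal projector onto $\mathrm{range}(\LL)=\one^\perp$, with eigenvalue $1$ of multiplicity $N-1$ and eigenvalue $0$ once; therefore $\tr{\LL^\dagger \LL}=N-1$, which is the claim. I expect the main obstacle to be not the algebra but the justification of the quadratic-form identity for $r_{ij}$, since resistance distance was defined physically rather than through $\LL^\dagger$; making that identification rigorous (and noting that both it and the rank-$(N-1)$ structure of $\LL$ rely on connectedness) is where care is needed. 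A completely different route that sidesteps the pseudoinverse is the probabilistic one: each term $w_{ij}\,r_{ij}$ equals the probability that edge $(i,j)$ appears in a weighted random spanning tree, and since every spanning tree has exactly $N-1$ edges, linearity of expectation gives the same total $N-1$.
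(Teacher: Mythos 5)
Your proof is correct. Note that the paper itself does not prove Lemma~\ref{Foster}: it is quoted from~\cite{Te91}, whose original argument is probabilistic (via commute times of random walks), so your algebraic derivation is a genuinely different and fully self-contained route. Your chain $\sum_{i\sim j} w_{ij}\, r_{ij} = \trace{\LL^{\dagger}\LL} = N-1$ is sound at every step: the quadratic-form identity $r_{ij} = (\ee_i-\ee_j)^{\top}\LL^{\dagger}(\ee_i-\ee_j)$ is equivalent to the representation $r_{ij}=\LL^{\dagger}_{ii}+\LL^{\dagger}_{jj}-\LL^{\dagger}_{ij}-\LL^{\dagger}_{ji}$ that the paper records in Section~2.5; the decomposition $\LL=\sum_{e\in E} w_e \bb_e\bb_e^{\top}=\BB^{\top}\WW\BB$ is stated explicitly in Section~2.3; and your final step is precisely the paper's identity~\eqref{LPlus03}, $\LL^{\dagger}\LL=\II-\frac{1}{N}\JJ$, whose trace is $N-1$. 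The caveat you flag yourself --- justifying the quadratic form from the physical definition --- you also resolve correctly: $\ee_i-\ee_j\perp\one$ lies in the range of $\LL$ exactly because $\calG$ is connected, so $\pphi=\LL^{\dagger}(\ee_i-\ee_j)$ is a legitimate potential and the additive multiple of $\one$ cancels in $\phi_i-\phi_j$. Your alternative spanning-tree argument ($w_{ij}\,r_{ij}$ equals the probability that the edge $\{i,j\}$ belongs to the weighted random spanning tree, which always has $N-1$ edges) is likewise valid and is essentially the classical combinatorial proof of Foster's theorem. What your main route buys is economy within this paper's framework: it uses nothing beyond the linear algebra already developed in Section~2, whereas the probabilistic routes require either random-walk commute-time identities (as in the cited source) or the matrix-tree/transfer-current machinery.
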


Two key quantities related to  resistance distances  are the  resistance distance of a given node and Kirchhoff index.
\begin{definition}
For an  undirected weighted graph  $\calG=(V,E,w)$,  the  resistance distance  $r_i$ of a  node $i$ is the sum of resistance distances between $i$ and all other nodes:
\begin{equation*}
r_i:=\sum^{N}_{j=1}r_{ij}\,.
\end{equation*}
The Kirchhoff index $\delta$ of $\calG$ is the sum of resistance distances over all the $N(N-1)/2$ pairs of nodes:
\begin{equation*}
\delta:=\sum_{i=1}^{N}\sum_{j=i}^{N}r_{ij}\,.
\end{equation*}
\end{definition}
Both quantities appear in practical applications. The reciprocal of $r_i$ times $N$ is in fact the current flow centrality~\cite{BrFl05,LiPeShYiZh19} of node $i$, which is equivalent to information centrality~\cite{StZe89,ShYiZh18} of $i$.  While the Kirchhoff index can be applied to measure  the overall connectedness of a network~\cite{TiLe10}, the robustness of first-order noisy networks~\cite{PaBa14}, as well as the edge importance of complex networks~\cite{LiZh18}.

\subsection{Graph Laplacian Matrix}
Mathematically, the topological and weighted properties of a graph $\calG$ are encoded in its generalized adjacency matrix $\AA$ with the entry $a_{ij}$ denoting the adjacency relation between vertices $i$ and $j$. If vertices $i$ and $j$ are linked to each other by an edge $e$, then $a_{ij}= a_{ji}=w_{e}> 0$. Otherwise, $a_{ij}=a_{ji}=0$ indicates that vertices $i$ and $j$ are not adjacent. In a weighted graph $\calG$, the strength  $s_i$ of a vertex $i$ is defined by $s_i=\sum_{j=1}^N a_{ij}$~\cite{BaBaPaVe04}. The diagonal strength matrix of graph $\calG$ is defined to be ${\SS} = {\rm diag}(s_1, s_2, \ldots, s_N)$, and the Laplacian matrix of $\calG$ is ${\LL}={\SS}-{\AA}$.

Let $\BB \in \mathbb{R}^{|E| \times |V|}$ be the incidence matrix of $\calG$. For each edge $e$ with two end vertices $i$ and $j$, a direction is assigned arbitrarily. Let $\bb_e^\top$ be the row of matrix $\BB$ associated with edge $e$. Then the element $b_{eu}$ at row corresponding to edge $e$ and column corresponding to vertex $u$ is defined as follows: $b_{eu} = 1$ if vertex $u$ is the tail of edge $e$, $b_{eu}=-1$ if vertex $u$ is the head of  edge $e$, and $b_{eu}=0$ otherwise. Let $\ee_u$ be the $u$-th canonical basis of the space $\mathbb{R}^{|V|}$, then for an edge $e$ connecting two vertices $i$ and $j$, $\bb_e$ can also be recast as $\bb_e=\ee_{i}-\ee_{j}$.  Let  $\WW \in \mathbb{R}^{|E| \times |E|}$ be a diagonal matrix with the diagonal entry $(e,e)$ being $w_e$. Then the Laplacian matrix $\LL$ of graph $\calG$ can be written as $\LL=\BB^T\WW\BB=\sum_{e\in E}w_e\bb_e\bb_e^{\top}$. Let $\one$ denote the vector  of appropriate dimensions with all entries being ones, and let $\JJ$  be the matrix defined by $\JJ=\one\one^\top$.  Moreover, let $\mathbf{0}$ and $\mathbf{O}$ denote, respectively, the zero vector and zero matrix. Since ${\SS} \one={\AA} \one$, then ${\LL} \one =\mathbf{0}$ and ${\LL} \JJ = \JJ {\LL}=\mathbf{O}$.

\subsection{Spectrum of  Graph  Laplacian  and its Pseudoinverse}

For a connected undirected graph  $\calG=(V,E,w)$, its Laplacian matrix $\LL$ is symmetric and positive semidefinite. All  its eigenvalues  are non-negative, with a unique zero eigenvalue. Let $0=\lambda_1< \lambda_2 \leq \lambda_3\leq \dots\leq \lambda_{N-1} \leq \lambda_N$ be the $N$ eigenvalues of  $\LL$, and let $\uu_k$, $ k={1,2,\dots,N}$, be their corresponding mutually orthogonal  unit eigenvectors. Then, $\LL$ admits the following spectral decomposition:  $\LL=\sum_{k=1}^{N}\lambda_k \uu_k\uu_k^\top$, which means that the $ij$ entry $\LL_{ij}$ of $\LL$ is $\LL_{ij}=\sum_{k=2}^{N}\lambda_k \uu_{ki}\uu_{kj}$,   where $\uu_{ki}$ is the $i$th component of vector $\uu_{k}$. It can be verified~\cite{LiSc18} that among all $N$-node connected undirected weighted graphs  $\calG=(V,E,w)$, the largest eigenvalue $\lambda_{N}$ of Laplacian matrix satisfies $\lambda_{N}\leq N w_{\max}$, with equality if and only if  $\calG$ is the $N$-node complete graph, where the weight of every  edge is $w_{\max}$.

Having $0$ as an eigenvalue, $\LL$ is singular and cannot be inverted. As a substitute for the inverse we use the Moore-Penrose generalized inverse of $\LL$, that we simply call pseudoinverse of $\LL$~\cite{BeGrTh74}.  As customary,  we use $\LL^{\dagger}$ to denote its pseudoinverse, which can be written as
\begin{equation}\label{LPlus01}
\LL^{\dagger}=\sum_{k=2}^{N}\frac{1}{\lambda_k}\uu_k\uu_k^{\top}.
\end{equation}
Then, the entry $\LL^{\dagger}_{ij}$ of   $\LL^{\dagger}$ at row $i$ and column $j$ can be expressed as $\LL^{\dagger}_{ij}=\sum_{k=2}^{N}\frac{1}{\lambda_k}\uu_{ki}\uu_{kj}$. Thus, the $i$th diagonal entry $\LL^{\dagger}_{ii}$ of $\LL^{\dagger}$ is equal to $\sum_{k=2}^{N}\frac{1}{\lambda_k}\uu^2_{ki}$. Since $\LL$ is symmetric, it is the same with   $\LL^{\dagger}$~\cite{FoPiReSa07}, which  can also be seen from the fact  $\LL^{\dagger}_{ij}=\LL^{\dagger}_{ji}=\sum_{k=2}^{N}\frac{1}{\lambda_k}\uu_{ki}\uu_{kj}$.

Note that for a general symmetric matrix, it shares the same null space as its Moore-Penrose generalized inverse~\cite{BeGrTh74}.
Since ${\LL} \one =\mathbf{0}$, it turns out that ${\LL}^{\dagger} \one =\mathbf{0}$. Considering $\JJ=\one\one^\top$, we further obtain ${\LL} \JJ = \JJ {\LL}={\LL}^{\dagger} \JJ = \JJ {\LL}^{\dagger}= \mathbf{O}$. Let $\II$ be the identity matrix of approximate dimensions.  Using the spectral decompositions of  $\LL$ and ${\LL}^{\dagger}$, it is not difficult to verify that
\begin{equation*}
\left(\LL+\frac{1}{N} \JJ\right)\left(\LL^{\dagger}+\frac{1}{N} \JJ\right)= \II.
\end{equation*}
Then, it follows that
\begin{equation}\label{LPlus02}
\LL^{\dagger}=\left(\LL+\frac{1}{N} \JJ\right)^{-1}-\frac{1}{N} \JJ,
\end{equation}
which was explicitly stated  in~\cite{GhBoSa08}  and was  implicitly applied in~\cite{XiGu03,BrFl05}. Another useful consequence of the above two equalities is
\begin{equation}\label{LPlus03}
\LL \LL^{\dagger}= \LL^{\dagger} \LL=\II-\frac{1}{N} \JJ\,,
\end{equation}
which will be used in the following text.

\subsection{Interpretations of Entries of  Pseudoinverse for Graph Laplacian  }

For a graph, many key quantities are encoded in the entries of the pseudoinverse for its Laplacian matrix, which in turns provide interpretations for the  pseudoinverse $\LL^{\dagger}$ for graph Laplacian $\LL$ from different angles.

\subsubsection{Topological interpretation}

 Chebotarev and Shamis~\cite{ChSh98} offer a topological explanation for $\LL^{\dagger}$ by representing the entries $\LL^{\dagger}_{ij}$ and $\LL^{\dagger}_{ii}$ of $\LL^{\dagger}$ in terms of the weight of spanning rooted forests:
 \begin{equation}\label{EE04a}
\LL^{\dagger}_{ij}=\frac{\varepsilon(\mathcal{F}_2^{ij})-\frac{1}{N}\varepsilon(\mathcal{F}_2)}{\varepsilon(\mathcal{F}_1)}
\end{equation}
and
\begin{equation}\label{EE04b}
\LL^{\dagger}_{ii}=\frac{\varepsilon(\mathcal{F}_2^{ii})-\frac{1}{N}\varepsilon(\mathcal{F}_2)}{\varepsilon(\mathcal{F}_1)}.
\end{equation}

In~\cite{RaZh11,RaZh13} Ranjan and Zhang put forth a measure for node centrality, called topological centrality.
For a node $i$, its topological centrality is defined as $1/\LL^{\dagger}_{ii}$. Moreover, many other authors also use $\LL^{\dagger}_{ii}$ to measure node importance in different scenarios~\cite{EsHa10,VaDeCe17}.

\subsubsection{Explanations from viewpoint of electrical networks}

Various quantities  related to electrical networks are relevant to the pseudoinverse $\LL^{\dagger}$ for graph Laplacian $\LL$. 
In the context of effective resistances, resistance distance $r_{ij}$ between any pair of nodes $i$ and $j$, the  resistance distance $r_{i}$ of a node $i$, and the Kirchhoff index of the whole graph are all encoded in the entries of $\LL^{\dagger}$. First, the resistance distance $r_{ij}$ between two vertices $i$ and $j$ can be written in terms of the entries of   $\LL^{\dagger}$ as
 $r_{ij}=\LL^{\dagger}_{ii}+\LL^{\dagger}_{jj}-\LL^{\dagger}_{ij}-\LL^{\dagger}_{ji}$~\cite{KlRa93}. Second, the Kirchhoff index  $\delta$ of a graph $\calG$ with $N$ nodes is equal to $N$ times  the trace of  $\trace{\LL^{\dagger}}$  of  matrix $\LL^{\dagger}$~\cite{KlRa93}, that is,
\begin{equation}\label{Kirchhoff01}
\delta=N\,\trace{\LL^{\dagger}}=N\sum_{i=1}^{N}\LL^{\dagger}_{ii} \,.
\end{equation}
Finally, the resistance distance $r_i$ of node $i$ can also be expressed in terms of the  diagonal elements of  $\trace{\LL^{\dagger}}$~\cite{BoFr13}:
\begin{equation}\label{EE04x}
R_i=N\,\LL^{\dagger}_{ii}+ \trace{\LL^{\dagger}}=N\left(\LL^{\dagger}_{ii}+\sum_{j=1}^{N}\LL^{\dagger}_{jj}\right),
\end{equation}
which implies
\begin{equation}\label{EE04}
\LL^{\dagger}_{ii}=\frac{R_i - \trace{\LL^{\dagger}}}{N}.
\end{equation}
Thus, $\LL^{\dagger}_{ii}$ indicates how close node $i$ is within the graph. The above formula implies that the ordering of nodes by structural centrality is identical to the ordering of nodes by current flow centrality and information centrality.

In addition to the aforementioned interpretations for $\LL^{\dagger}$, there are many other representations or explanations for the entries of $\LL^{\dagger}$. For example, it was shown that the pseudoinverse $\LL^{\dagger}$ for graph Laplacian also has persuasive interpretations in  potential of nodes in an electrical network~\cite{BoFr12,BoFr13}, expected number of visit times of random walks~\cite{RaZh11,RaZh13}, and many others~\cite{KiNeSh97}.

\section{Related work}

As shown above, the pseudoinverse $\LL^{\dagger}$ for graph Laplacian  $\LL$ arises in various  application settings, and many relevant quantities can be expressed in terms of the entries of $\LL^{\dagger}$. Particularly, the  diagonal of  $\LL^{\dagger}$ contains much structural and dynamical information about a graph. For example,  the diagonal of $\LL^{\dagger}$ is sufficient to
compute the structural centrality, current flow centrality, information centrality, and Kirchhoff index of a graph. It it thus of great interest to compute  matrix  $\LL^{\dagger}$. However, computing  $\LL^{\dagger}$ is a theoretical challenge.   By virtue of~\eqref{LPlus01} and~\eqref{LPlus02}, na\"{\i}ve  methods for computing $\LL^{\dagger}$ of a graph involve either calculating the eigenvalues and eigenvector of  $\LL$  or  inverting a suitable perturbed version $\LL+\frac{1}{n} \JJ$ of $\LL$ and then subtracting the perturbation $\frac{1}{n} \JJ$. Both straightforward ways to compute $\LL^{\dagger}$  cost $O(N^3)$ time, which are infeasible for large-scale networks with millions of nodes and edges.

In order to speed up the computation of matrix $\LL^{\dagger}$, a lot of endeavors have been devoted to fast algorithms for evaluating $\LL^{\dagger}$.  In~\cite{RaZhBo14}, an incremental approach was designed to compute $\LL^{\dagger}$. For general cases, this method performs better than the common approaches, but for the worst case it still has a computation cost of $O(N^3)$. Moreover, several approximation algorithms were presented to estimate matrix $\LL^{\dagger}$~\cite{BoFr12} or the diagonal entries of a more general matrix~\cite{TaSa11,TaSa12,WuLaKaStGa16}, with an aim to reduce the computational complexity. Although these algorithms have low complexity, they have no approximation guarantee. Finally, many algorithms were developed to address a related issue of estimating the trace of a matrix~\cite{AvTo11,GaStOr17,UbChSa17}. However, these algorithms do not apply to evaluate the diagonal of $\LL^{\dagger}$. It is thus desirable to propose an effective and efficient algorithm for approximating $\LL^{\dagger}$ that provides error bounds on the diagonal entries. This is the main research subject of the present paper.

In order to obtain an exact expression for every diagonal element $\LL_{ii}^{\dagger}$, $i={1,2,\dots,N}$, of the pseudoinverse of $\LL$ for a graph with $N$ vertices,  one may determine all $N-1$ non-zero eigenvalues of $\LL$ and their corresponding mutually orthogonal unit eigenvectors. Making use of the approach similar to that in~\cite{YiYaZhZhPa22}, it is not difficult to derive a rigorous formular for each $\LL_{ii}^{\dagger}$ for some particular graphs, including the path graph, the ring graph, the star graph, and the complete graph, since the eigenvalues and eigenvectors of their Laplacian matrices can be obtained explicitly. However, these graphs cannot mimic real networks, most of which display the striking scale-free small-world behaviors~\cite{Ne03}. Thus far, rigorous solution for $\LL_{ii}^{\dagger}$ associated with scale-free small-world networks is still missing. One the other hand, some real networks (e.g., power grid) are exponential with their degree distribution decaying exponentially~\cite{AmScBaSt00}, but related analytical work for $\LL_{ii}^{\dagger}$ is also much less.

\section{Fast Algorithm for  Approximating the Diagonal of Pseudoinverse of Graph Laplacian }


In this section, we propose an algorithm to compute an approximation of all diagonal entries of  $\LL^{\dagger}$ in nearly linear time with respect to the number of edges. Our algorithm has an error guarantee  with a high probability.

To achieve our goal, we first reduce the problem for computing a diagonal entry of  $\LL^{\dagger}$ to calculating the $\ell_2$  norm of a vector. Then  using the techniques of  random projections and linear system solvers, we estimate the $\ell_2$ norm,  aiming  to reduce the computation complexity. We now express  $\LL^{\dagger}_{uu}$ in an Euclidian norm. According to~\eqref{LPlus03} and the relation $\LL \JJ=\mathbf{O}$, we obtain
\begin{align*}
&\LL^{\dagger}_{uu} =
\ee_u^T \LL^\dag \LL \LL^\dag \ee_u
=
\ee_u^T \LL^\dag \BB^T \WW \BB \LL^\dag \ee_u \\
= &\ee_u^T \LL^\dag \BB^T \WW^{1/2}
\WW^{1/2} \BB \LL^\dag \ee_u = \norm{\WW^{1/2} \BB \LL^\dag \ee_u}^2.
\end{align*}

In this way, we have reduced the computation of a diagonal element $\LL^{\dagger}_{uu}$ of  $\LL^{\dagger}$ to evaluating the $\ell_2$ norm of  a vector  in $\mathbb{R}^M$. However, through using this  $\ell_2$ norm, the complexity for exactly computing all diagonal elements is still very high. Fortunately,  applying the Johnson-Lindenstrauss lemma~\cite{JoLi84,Ac01,Ac03},  the $\ell_2$ norm can
be nearly  preserved by projecting the vector onto a low-dimensional subspace, while the computational cost  is significantly reduced. For consistency, we  introduce the Johnson-Lindenstrauss lemma~\cite{JoLi84,Ac01,Ac03}.
\begin{lemma}
	\label{lemma:JL}
	Given fixed vectors $\vv_1,\vv_2,\ldots,\vv_N\in \mathbb{R}^M$ and
	$\epsilon>0$, let
	$\QQ_{k\times M}$ be a random $\pm 1/\sqrt{k}$ matrix (i.e., independent Bernoulli entries)
	with $k\ge 24\log N/\epsilon^2$. Then with probability at least $1-1/N$,
	\[(1-\epsilon)\|\vv_i-\vv_j\|_2^2\le \|\QQ \vv_i-\QQ \vv_j\|_2^2\le
	(1+\epsilon)\|\vv_i-\vv_j\|_2^2\] for all pairs $i,j\le N$.
\end{lemma}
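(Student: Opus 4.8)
The plan is to reduce the two-sided, all-pairs statement to a one-sided concentration bound for a single fixed vector and then close with a union bound. By linearity $\QQ\vv_i - \QQ\vv_j = \QQ(\vv_i - \vv_j)$, so writing $\xx = \vv_i - \vv_j$ it suffices to control $\|\QQ\xx\|_2^2$ around $\|\xx\|_2^2$, and by homogeneity I may rescale and assume $\|\xx\|_2 = 1$. The target is a concentration inequality of the form $\Pr\left[\,\left|\,\|\QQ\xx\|_2^2 - 1\,\right| > \epsilon\,\right] \le 2e^{-ck\epsilon^2(1-O(\epsilon))}$ for a universal constant $c$. Applying this to each of the fewer than $N^2/2$ difference vectors $\vv_i - \vv_j$ and taking a union bound, the probability that some pair violates the two-sided estimate is at most $N^2 e^{-ck\epsilon^2(1-O(\epsilon))}$; the constant $24$ in the hypothesis $k \ge 24\log N/\epsilon^2$ is precisely what drives this below $1/N$ (for $\epsilon \le \tfrac12$, where the $1-O(\epsilon)$ factor stays bounded away from $0$). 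Thus all the genuine work is in the single-vector bound.

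Second, I would set up the expectation computation and a Chernoff-type argument. Write each entry as $Q_{ab} = r_{ab}/\sqrt{k}$ with the $r_{ab}$ independent Rademacher signs. For a unit vector $\xx$ the $a$-th coordinate of $\QQ\xx$ equals $W_a/\sqrt{k}$ with $W_a = \sum_b r_{ab} x_b$, so the rows $W_1,\ldots,W_k$ are i.i.d., each satisfying $\mathbb{E}[W_a^2] = \sum_b x_b^2 = 1$. Hence $\|\QQ\xx\|_2^2 = \tfrac1k\sum_{a=1}^k W_a^2$ has mean exactly $1$, confirming that the projection is unbiased. For the upper tail I would apply the exponential Markov inequality together with independence of the rows: for $\lambda > 0$, $\Pr[\|\QQ\xx\|_2^2 \ge 1+\epsilon] \le e^{-\lambda k(1+\epsilon)}(\mathbb{E}[e^{\lambda W^2}])^k$, where $W$ denotes a generic $W_a$, and handle the lower tail by the same device with the roles reversed.

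The hard part will be bounding the moment generating function $\mathbb{E}[e^{\lambda W^2}]$ of the squared Rademacher sum, since $W$ is not Gaussian and has no closed-form density. My plan is to show it is dominated by the Gaussian MGF via a linearization trick: for $0 \le \lambda < \tfrac12$ insert an auxiliary standard normal $g$ through the identity $e^{\lambda w^2} = \mathbb{E}_g[e^{\sqrt{2\lambda}\,w g}]$, interchange expectations, and use independence of the $r_b$ together with $\cosh t \le e^{t^2/2}$ to obtain
\[\mathbb{E}\bigl[e^{\lambda W^2}\bigr] = \mathbb{E}_g \prod_b \cosh\!\bigl(\sqrt{2\lambda}\,g\,x_b\bigr) \le \mathbb{E}_g \prod_b e^{\lambda g^2 x_b^2} = \mathbb{E}_g\bigl[e^{\lambda g^2}\bigr] = (1-2\lambda)^{-1/2},\]
where the last two steps use $\sum_b x_b^2 = 1$ and the Gaussian MGF. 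This is exactly the bound one would get for a genuine Gaussian projection, so the $\pm 1$ matrix is ``no worse'' than Gaussian; the lower tail follows from the analogous comparison for $\mathbb{E}[e^{-\mu W^2}]$.

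Finally, I would optimize and assemble. Substituting the MGF bound gives $\Pr[\|\QQ\xx\|_2^2 \ge 1+\epsilon] \le \bigl((1-2\lambda)^{-1/2} e^{-\lambda(1+\epsilon)}\bigr)^k$; choosing $\lambda = \epsilon/(2(1+\epsilon))$ and expanding $\log(1+\epsilon) \le \epsilon - \tfrac{\epsilon^2}{2} + \tfrac{\epsilon^3}{3}$ yields a clean tail of the form $\exp(-\tfrac{k}{4}(\epsilon^2 - \tfrac23\epsilon^3))$, with the lower tail identical. Feeding $2\exp(-\tfrac{k}{4}(\epsilon^2 - \tfrac23\epsilon^3))$ into the union bound from the first paragraph and checking that the constant $24$ closes the estimate for $\epsilon \le \tfrac12$ completes the proof. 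I expect the only genuinely delicate points to be the Gaussian-domination step for the Rademacher MGF and the constant-chasing at the end; the expectation computation and the Chernoff setup are routine.
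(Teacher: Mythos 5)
The paper does not actually prove this lemma: it is imported verbatim from~\cite{JoLi84,Ac01,Ac03} (the $\pm 1$-entry version is Achlioptas's theorem, quoted in the form standard in the graph-sparsification literature), so your proposal must be judged on its own merits rather than against an in-paper argument. Your reduction and upper tail are correct and are precisely the classical Achlioptas-style proof: linearity plus homogeneity reduce to a unit vector; the rows $W_a=\sum_b r_{ab}x_b$ are i.i.d.\ with $\mathbb{E}[W_a^2]=1$; the linearization $e^{\lambda w^2}=\mathbb{E}_g[e^{\sqrt{2\lambda}\,wg}]$, Tonelli (the integrand is positive), $\cosh t\le e^{t^2/2}$, and $\sum_b x_b^2=1$ give $\mathbb{E}[e^{\lambda W^2}]\le(1-2\lambda)^{-1/2}$ for $0\le\lambda<1/2$; and $\lambda=\epsilon/(2(1+\epsilon))$ yields the tail $\bigl((1+\epsilon)e^{-\epsilon}\bigr)^{k/2}\le\exp\bigl(-\tfrac{k}{4}(\epsilon^2-\tfrac{2}{3}\epsilon^3)\bigr)$, which for $\epsilon\le\tfrac12$ and $k\ge 24\log N/\epsilon^2$ is at most $N^{-4}$, so the union bound over $\binom{N}{2}<N^2/2$ pairs closes with room to spare. (Strictly, your argument proves the lemma for $\epsilon\le\tfrac12$, which is all the paper ever uses, since Lemma~\ref{lem:error2} assumes $0<\epsilon<1/2$.)

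The genuine gap is the lower tail, which you wave off as ``identical'' via ``the analogous comparison for $\mathbb{E}[e^{-\mu W^2}]$.'' The linearization trick does not transfer to a negative exponent: one would need an imaginary Gaussian, $e^{-\mu w^2}=\mathbb{E}_g[\cos(\sqrt{2\mu}\,wg)]$, giving $\mathbb{E}[e^{-\mu W^2}]=\mathbb{E}_g\prod_b\cos(\sqrt{2\mu}\,g x_b)$, and the pointwise bound $\cos t\le e^{-t^2/2}$ cannot be multiplied across factors once cosines go negative (two negative factors make the product positive, and $|\cos t|\le e^{-t^2/2}$ is false, e.g.\ at $t=\pi$). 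Worse, the Gaussian domination you assert is simply false in general: for $x=(1/\sqrt2,1/\sqrt2)$ one has $W^2\in\{0,2\}$ equiprobably, so $\mathbb{E}[e^{-\mu W^2}]=\tfrac12(1+e^{-2\mu})\to\tfrac12$, while $(1+2\mu)^{-1/2}\to 0$; already at $\mu=2$ the comparison fails. The standard repair is elementary and keeps your constants: use $e^{-y}\le 1-y+y^2/2$ for $y\ge 0$ together with the exact Rademacher moment $\mathbb{E}[W^4]=3-2\sum_b x_b^4\le 3$ to get $\mathbb{E}[e^{-\mu W^2}]\le 1-\mu+\tfrac{3}{2}\mu^2\le e^{-\mu+3\mu^2/2}$; choosing $\mu=\epsilon/3$ gives $\Pr\bigl[\|\QQ\xx\|^2\le 1-\epsilon\bigr]\le e^{-k\epsilon^2/6}$, which matches your upper-tail exponent for $\epsilon\le\tfrac12$ (since $\tfrac{k}{4}\epsilon^2(1-\tfrac{2}{3}\epsilon)\ge\tfrac{k}{6}\epsilon^2$ there), so $k\ge 24\log N/\epsilon^2$ still forces the total failure probability below $N^2\cdot N^{-4}\le 1/N$. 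With that one paragraph supplied, your proof is complete.
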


Let $\QQ$ be a $k\times M$ random projection matrix. By Lemma~\ref{lemma:JL},  $\|\QQ \WW^{\frac{1}{2}} \BB \LL^{\dagger}\ee_u\|$ is a good approximation for $\|\WW^{\frac{1}{2}} \BB \LL^{\dagger} \ee_u\|$. Here we can use sparse matrix multiplication to compute $\QQ \WW^{\frac{1}{2}} \BB$, which takes $2M \times 24\log N/\epsilon^2+M$ time, since $\BB$ has $2M$ non-zero elements and $\WW^{\frac{1}{2}}$ is a diagonal matrix. However, computing $\ZZ=\QQ \WW^{\frac{1}{2}} \BB \LL^{\dagger}$ directly involves calculating $\LL^{\dagger}$.  In order to avoid calculating $\LL^{\dagger}$, we resort to the nearly-linear time solver~\cite{SpTe04,SpTe14} as stated in the following lemma. In the sequel, we use the notation $\Otil(\cdot)$ to hide $\mathrm{poly}(\log N )$ factors.
\begin{lemma}
\label{lemma:ST}
There is an algorithm $\xx = \mathtt{LaplSolve}(L,\yy,\theta)$ which	takes a Laplacian matrix $\LL$,
a column vector $\yy$, and an error parameter $\theta > 0$, and returns a column vector $\xx$ satisfying  $\one^{\top} \xx=0$ and
\[
\|\xx - \LL^{\dagger} \yy\|_{\LL} \leq \theta \|\LL^{\dagger} \yy\|_{\LL},
\]
where $\norm{\yy}_{\LL} = \sqrt{\yy^{\top} \LL \yy}$.
The algorithm runs in expected time $\Otil \left(M \log(1/\theta) \right)$.
\end{lemma}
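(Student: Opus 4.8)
The statement is precisely the nearly-linear time Laplacian solver of Spielman and Teng, so my plan is to recover it through combinatorial preconditioning together with a convergence analysis carried out directly in the $\LL$-norm. First I would reduce the singular problem to a nonsingular one on the orthogonal complement of $\one$. Since $\LL\one=\mathbf{0}$ and $\LL$ is symmetric, the restriction of $\LL$ to the subspace $\one^{\perp}$ is positive definite, and by the relation $\LL\LL^{\dagger}=\II-\frac{1}{N}\JJ$ the vector $\xx^{\star}:=\LL^{\dagger}\yy$ is the unique element of $\one^{\perp}$ satisfying $\LL\xx^{\star}=\Prj\yy$, where $\Prj=\II-\frac{1}{N}\JJ$ projects onto $\one^{\perp}$. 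Thus it suffices to solve $\LL\xx=\yy$ iteratively while keeping every iterate in $\one^{\perp}$; starting at $\xx_0=\mathbf{0}$ and using a preconditioner $\MM$ with $\MM\one=\mathbf{0}$ keeps the iterates in $\one^{\perp}$, so $\one^{\top}\xx=0$ holds automatically (a final projection enforces it exactly if needed).

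The engine is a preconditioned first-order method (Chebyshev iteration, or preconditioned conjugate gradients). Suppose we can build a Laplacian $\MM$ that spectrally approximates $\LL$ on $\one^{\perp}$, meaning $\frac{1}{\sigma}\MM\pleq\LL\pleq\sigma\MM$ for a quality factor $\sigma=\Otil(1)$, and whose systems we can solve approximately in $\Otil(M)$ time. The standard analysis of Chebyshev iteration then gives, after $t$ steps,
\[
\norm{\xx_t-\xx^{\star}}_{\LL}\le 2\left(\frac{\sqrt{\sigma}-1}{\sqrt{\sigma}+1}\right)^{t}\norm{\xx_0-\xx^{\star}}_{\LL}.
\]
With $\xx_0=\mathbf{0}$ we have $\norm{\xx_0-\xx^{\star}}_{\LL}=\norm{\LL^{\dagger}\yy}_{\LL}$, so taking $t=O(\sqrt{\sigma}\,\log(1/\theta))=\Otil(\log(1/\theta))$ drives the right-hand side below $\theta\norm{\LL^{\dagger}\yy}_{\LL}$, which is exactly the claimed guarantee. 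Each iteration costs one multiplication by $\LL$ (this is $O(M)$ because $\LL$ has $O(M)$ nonzeros) plus one approximate solve against $\MM$ (which is $\Otil(M)$), for a total of $\Otil(M\log(1/\theta))$. The inexactness of the inner solves perturbs $\sigma$ by only a constant factor and hence does not affect the asymptotics.

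The main obstacle, and the genuine content of the theorem, is constructing a preconditioner $\MM$ of quality $\sigma=\Otil(1)$ that is itself solvable in nearly linear time. The plan here is the recursive solver chain: extract a low-stretch spanning tree of $\calG$, use it to identify a small set of off-tree edges whose sampling yields a spectral sparsifier with $\Otil(N)$ edges, recurse on the sparsifier to precondition it, and bottom out once the graph is small enough to solve directly. The delicate part is proving that sparsifier sizes shrink geometrically down the chain while spectral quality stays polylogarithmic, so that the total work telescopes to $\Otil(M)$; I would treat nearly-linear-time spectral sparsification and low-stretch spanning trees as the external black boxes they historically are. The remaining bookkeeping, namely that approximate inner solves, the restriction to $\one^{\perp}$, and the expected (rather than worst-case) running time introduced by the randomized sparsification all compose without loss, is routine.
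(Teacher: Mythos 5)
First, note what the paper actually does with this lemma: it does not prove it. Lemma~\ref{lemma:ST} is imported verbatim as an external black box, cited to Spielman and Teng, and used only through the stated interface (the normalization $\one^{\top}\xx=0$, the $\LL$-norm error bound, and the $\Otil(M\log(1/\theta))$ expected running time). So there is no internal proof to compare yours against; what you have written is a reconstruction of the proof in the cited literature, and at the level of architecture it is the right one: reduce to the invertible restriction on $\one^{\perp}$ via the projection $\Prj=\II-\frac{1}{N}\JJ$, run a preconditioned Chebyshev-type iteration with the standard $O(\sqrt{\sigma}\log(1/\theta))$ iteration count measured in the $\LL$-norm, and supply the preconditioner through a recursive chain of combinatorial approximations that bottoms out at a directly solvable instance.

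Two places where your sketch under-sells the actual content, if you intended it as more than a pointer to the literature. First, the step you call ``routine bookkeeping''---composing inexact inner solves---is one of the genuinely hard parts of the Spielman--Teng analysis. The Chebyshev convergence bound you quote presumes the preconditioner application is a \emph{fixed linear operator}; a recursive approximate solver is not one (each call realizes a different randomized map, and in the conjugate-gradient variant a nonlinear one), so one must either drive the inner solves to enough accuracy that the induced operator is spectrally close to $\MM^{\dagger}$ and propagate that requirement down the chain, or use an iteration provably robust to such perturbations. Arranging the per-level accuracies so the total work telescopes to $\Otil(M)$ rather than accruing an extra polylogarithmic factor per level is exactly the delicate recursion of the cited papers, and asserting it ``composes without loss'' is the one claim in your proposal that a referee would not accept without proof. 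Second, the geometric size decrease down the chain does not come from spectral sparsifiers alone: a sparsifier with $\Otil(N)$ edges leaves the dimension $N$ unchanged. In the standard construction the level preconditioner is an \emph{ultrasparsifier} (a low-stretch spanning tree plus a small number of off-tree edges sampled by stretch, in the Koutis--Miller--Peng formulation you are implicitly following), and the shrinkage comes from a partial Cholesky elimination of degree-one and degree-two vertices, which reduces the instance to the order of the number of off-tree edges before recursing. With those two repairs your outline matches the known proof of the result the paper cites.
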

Using Lemma~\ref{lemma:ST} we can avoid  calculating $\LL^{\dagger}$ by solving the system of equations $\LL \zz_i=\qq_i$, $i=1,\ldots,k$, where  $\zz^\top_i$ and $\qq^\top_i$ are the $i$th row of $\ZZ$ and $\QQ \WW^{\frac{1}{2}} \BB$, respectively.  Lemma~\ref{lemma:ST} indicates that $\zz^\top _i$ can be efficiently approximated by using $\mathtt{LaplSolve}$.

Lemmas~\ref{lemma:JL} and~\ref{lemma:ST} are critical to proving the error bounds  of our algorithm, which also involves Frobenius norm of a matrix. For a matrix $\AA \in \mathbb{R}^{M \times N}$ with entries $a_{ij}$ ($i=1,2,\cdots,M$, $j=1,2,\cdots,N$), its Frobenius norm $\norm{\AA}_{F}$ is defined as
\begin{equation*}
\norm{\AA}_{F}:=\sqrt{\sum_{i=1}^{M} \sum_{j=1}^{N} a_{ij}^2} = \sqrt{\trace{\AA^{\top} \AA}}.
\end{equation*}
By definition, it is easy to verify that $\norm{\WW}_{F} \geq w_{\min}$.

\begin{lemma}\label{lem:error2}
	Given an undirected weighted graph $\calG=(V, E,w)$ with $N$ nodes and Laplacian matrix $\LL$,  an approximate factor $0<\epsilon <1/2$,  and a $k\times N$ matrix $\ZZ$,  satisfing the following relation
	\[
	(1-\epsilon) \LL^{\dagger}_{uu}
	\leq
	\|\ZZ \ee_u\|^{2}
	\leq
	(1+\epsilon) \LL^{\dagger}_{uu},
	\]
	for an arbitrary node $u\in V$ and
	\begin{equation*}
	\begin{split}
	(1-\epsilon) \|\WW^{\frac{1}{2}} \BB \LL^{\dagger} (\ee_u-\ee_v)\|^2
	\leq
	\|\ZZ (\ee_u - \ee_v)\|^2& \\
	\leq
	(1+\epsilon) \|\WW^{\frac{1}{2}} \BB \LL^{\dagger} (\ee_u-\ee_v)\|^2&
	\end{split}
	\end{equation*}
for any node pair  $u,v \in V$,
let $\zz^\top_i$ be the $i$-th row of the $k \times N$ matrix $\ZZ$, and let $\tilde{\zz}^\top_i$ be an approximation of  $\zz_i$ for all $i\in \{1,2,...,k\}$ obeying
	\begin{equation}\label{EE18} \|\zz_i-\tilde{\zz}_i\|_{\LL}\le\delta
	\|\zz_{i}\|_{\LL},
	\end{equation}
	where
	\begin{equation}\label{EE19}
	\delta \leq  \frac{\epsilon }{3}
	 \sqrt{\frac{(N-1)(1-\eps)w_{\min}}{N^4(1+\eps)w_{\max}}}.
	\end{equation}
	Then for an arbitrary node $u \in V$,
		\begin{align}
		\label{EE20}
		(1 - \epsilon)^2 \LL^{\dagger}_{uu}
		\leq
		\|\ZZtil \ee_{u}\|^2
		\leq
		(1 + \epsilon)^2 \LL^{\dagger}_{uu},
		\end{align}
where $\ZZtil^\top = [\tilde{\zz}_1, \tilde{\zz}_2, ..., \tilde{\zz}_k]$.
\end{lemma}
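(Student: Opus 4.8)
The plan is to control how much the row-wise approximation $\ZZtil$ perturbs the $u$-th column norm, and to show this perturbation is small relative to $\sqrt{\LL^{\dagger}_{uu}}$. Writing $\eta:=\norm{(\ZZtil-\ZZ)\ee_u}$, the triangle inequality together with the first hypothesis $\norm{\ZZ\ee_u}\in[\sqrt{1-\eps},\sqrt{1+\eps}]\,\sqrt{\LL^{\dagger}_{uu}}$ reduces the two-sided bound \eqref{EE20} to the single inequality $\eta\le\tfrac{\eps}{3}\sqrt{\LL^{\dagger}_{uu}}$. Indeed, for $0<\eps<1/2$ one checks by elementary estimates (using $1-\sqrt{1-\eps}\ge\eps/2$ and $\sqrt{1-\eps}\ge1/\sqrt2$) that $\sqrt{1-\eps}-\eps/3\ge1-\eps$ and $\sqrt{1+\eps}+\eps/3\le1+\eps$; hence $\norm{\ZZtil\ee_u}=\norm{\ZZ\ee_u}\pm\eta$ stays within $[(1-\eps),(1+\eps)]\sqrt{\LL^{\dagger}_{uu}}$, and squaring gives \eqref{EE20}. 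So the whole problem collapses to bounding $\eta$.

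For that bound I would expand $\eta^2=\sum_{i=1}^{k}\big((\tilde{\zz}_i-\zz_i)^\top\ee_u\big)^2$ and estimate each term. Put $\dd_i:=\tilde{\zz}_i-\zz_i$. Both the exact solution $\zz_i=\LL^{\dagger}\qq_i$ and the solver output $\tilde{\zz}_i$ are orthogonal to $\one$ (Lemma~\ref{lemma:ST} guarantees $\one^\top\tilde{\zz}_i=0$, and $\LL^{\dagger}\qq_i$ lies in the image of $\LL$), so $\dd_i^\top\one=0$. Using $\LL\LL^{\dagger}=\II-\tfrac1N\JJ$ from \eqref{LPlus03} this yields $\dd_i^\top\ee_u=\dd_i^\top\LL\LL^{\dagger}\ee_u$, and Cauchy--Schwarz in the $\LL$-seminorm gives $|\dd_i^\top\ee_u|\le\norm{\dd_i}_{\LL}\,\norm{\LL^{\dagger}\ee_u}_{\LL}$. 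Since $\norm{\LL^{\dagger}\ee_u}_{\LL}^2=\ee_u^\top\LL^{\dagger}\LL\LL^{\dagger}\ee_u=\LL^{\dagger}_{uu}$ (again by \eqref{LPlus03} and $\LL^{\dagger}\one=\mathbf{0}$), each term is at most $\LL^{\dagger}_{uu}\,\norm{\dd_i}_{\LL}^2$. Summing and invoking \eqref{EE18} gives $\eta^2\le\LL^{\dagger}_{uu}\,\delta^2\sum_{i=1}^{k}\norm{\zz_i}_{\LL}^2$.

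It remains to bound the aggregate $\sum_i\norm{\zz_i}_{\LL}^2$, and this is where the pairwise hypothesis and Foster's theorem enter. I would rewrite $\sum_i\norm{\zz_i}_{\LL}^2=\trace{\LL\ZZ^\top\ZZ}=\sum_{e\in E}w_e\,\norm{\ZZ\bb_e}^2$, using $\LL=\sum_e w_e\bb_e\bb_e^\top$ and $\bb_e=\ee_i-\ee_j$. Applying the pairwise hypothesis to the endpoints of each edge, $\norm{\ZZ\bb_e}^2\le(1+\eps)\norm{\WW^{1/2}\BB\LL^{\dagger}\bb_e}^2=(1+\eps)\,r_e$, the last equality because $\norm{\WW^{1/2}\BB\LL^{\dagger}(\ee_i-\ee_j)}^2=(\ee_i-\ee_j)^\top\LL^{\dagger}\LL\LL^{\dagger}(\ee_i-\ee_j)=(\ee_i-\ee_j)^\top\LL^{\dagger}(\ee_i-\ee_j)$ is exactly the resistance distance across $e$. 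Foster's theorem (Lemma~\ref{Foster}) then collapses the edge sum, $\sum_e w_e r_e=N-1$, so $\sum_i\norm{\zz_i}_{\LL}^2\le(1+\eps)(N-1)$.

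Combining the two bounds gives $\eta^2\le\LL^{\dagger}_{uu}\,\delta^2(1+\eps)(N-1)$. Finally I would insert \eqref{EE19}: squaring it shows $\delta^2(1+\eps)(N-1)\le\tfrac{\eps^2}{9}\cdot\tfrac{(N-1)^2(1-\eps)w_{\min}}{N^4 w_{\max}}\le\tfrac{\eps^2}{9}$, since the trailing factor is at most $1/N^2\le1$. Hence $\eta\le\tfrac{\eps}{3}\sqrt{\LL^{\dagger}_{uu}}$, the reduced target, and \eqref{EE20} follows. The main obstacle is the third step: the accumulated solver error $\sum_i\norm{\zz_i}_{\LL}^2$ must be estimated globally rather than row-by-row, by re-summing it over edges into $\sum_e w_e\norm{\ZZ\bb_e}^2$ so that the pairwise hypothesis and Foster's identity apply; everything else is Cauchy--Schwarz and elementary $\eps$-bookkeeping. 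I note in passing that the hypothesis on $\delta$ is far stronger than the diagonal bound alone requires (only $\delta^2(1+\eps)(N-1)\le\eps^2/9$ is used), the extra factors presumably being there so that the same $\delta$ also serves the companion pairwise (resistance-distance) guarantee.
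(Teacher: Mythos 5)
Your proof is correct, but it takes a genuinely different route from the paper's. Both arguments share the outer reduction (the two-sided bound \eqref{EE20} follows once the perturbation of the $u$-th column norm is at most an $\frac{\eps}{3}$-fraction of the target scale) and the final aggregate step, where the pairwise hypothesis gives $\norm{\ZZ \bb_e}^2 \le (1+\eps) r_e$ and Foster's theorem (Lemma~\ref{Foster}) collapses $\sum_{e} w_e r_e = N-1$. The divergence is in how $\norm{(\ZZ - \ZZtil)\ee_u}$ is controlled. The paper writes $\ee_u - \frac{1}{N}\one = \frac{1}{N}\sum_{v \neq u}(\ee_u - \ee_v)$, telescopes each difference along a simple path $P_{uv}$, applies triangle and Cauchy--Schwarz inequalities to reach $N^{1/2}\norm{(\ZZ - \ZZtil)\BB^\top}_F$, converts to the weighted Frobenius norm at a cost of $w_{\min}^{-1/2}$, and---because the resulting bound is uniform in $u$---must separately establish the spectral lower bound $\norm{\ZZ\ee_u}^2 \ge (1-\eps)(N-1)^2/(N^3 w_{\max})$ via $\lambda_N \le N w_{\max}$ to turn an absolute error into a relative one; these steps are precisely where the factors $N^4$, $w_{\max}/w_{\min}$ and $(1-\eps)$ in \eqref{EE19} originate. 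You instead exploit $\dd_i^\top \one = 0$ and $\LL\LL^{\dagger} = \II - \frac{1}{N}\JJ$ to write $\dd_i^\top \ee_u = \dd_i^\top \LL \LL^{\dagger}\ee_u$, and Cauchy--Schwarz in the $\LL$-seminorm yields $\abs{\dd_i^\top \ee_u} \le \norm{\dd_i}_{\LL}\,\sqrt{\LL^{\dagger}_{uu}}$, so the node-dependent scale $\sqrt{\LL^{\dagger}_{uu}}$ appears for free: no path decomposition, no Frobenius/$w_{\min}$ conversion, and no spectral lower bound are needed. Your route is cleaner and strictly sharper---it only uses $\delta^2(1+\eps)(N-1) \le \eps^2/9$, so \eqref{EE19} holds with a large margin, and your closing observation about this slack is accurate: the paper's $\delta$ is sized for its cruder chain of estimates, not because the diagonal bound requires it. Your elementary $\eps$-bookkeeping also checks out ($\sqrt{1+\eps} + \eps/3 \le 1+\eps$ always, and $\sqrt{1-\eps} - \eps/3 \ge 1-\eps$ for $\eps \le 3/4$, which covers $\eps < 1/2$). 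One shared caveat, not a gap specific to you: the lemma as stated never assumes $\ZZ\one = \mathbf{0}$; you justify it from $\zz_i = \LL^{\dagger}\qq_i$ and the solver guarantee $\one^\top \tilde{\zz}_i = 0$, while the paper's proof simply asserts $\ZZ\one = \ZZtil\one = \mathbf{0}$, so both arguments rely on the same implicit algorithmic context.
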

\begin{proof}
We first show that in order to prove (\ref{EE20}), it suffices  to show that for any node $u\in V$,
	\begin{align}\label{EE211}
		&\quad	\abs{\|\ZZ \ee_u\|^2-\|\ZZtil \ee_u\|^2}\notag\\
		&=
		\abs{\|\ZZ \ee_u\|-\|\ZZtil \ee_u\|}\times	\abs{\|\ZZ \ee_u\|+\|\ZZtil \ee_u\|} \notag\\
		&\le
		\left(\frac{2\epsilon}{3}+\frac{\epsilon^2}{9}\right)\norm{\ZZ \ee_u}^2,
		\end{align}	
which is satisfied if the following relation holds
		\begin{equation}\label{EE21}
		\abs{\|\ZZ \ee_u\|-\|\ZZtil \ee_u\|} \le
		\frac{\epsilon}{3}\norm{\ZZ \ee_u}.
		\end{equation}
This can be  explained by the following arguments. On the one hand, if $\abs{\norm{\ZZ\ee_u}^2-\|{\ZZtil\ee_u}\|^2} \le \left(\frac{2\epsilon}{3}+\frac{\epsilon^2}{9}\right)\norm{\ZZ\ee_u}^2$, it follows that
\begin{align*}
\left(1-\frac{2\epsilon}{3}-\frac{\epsilon^2}{9}\right)\norm{\ZZ\ee_u}^2 \le & \|{\ZZtil\ee_u}\|^2  \\
\le & \left(1+\frac{2\epsilon}{3}+\frac{\epsilon^2}{9}\right)\norm{\ZZ\ee_u}^2.
\end{align*}
Since $0<\epsilon \le 1/2$ (meaning $\epsilon^2/9 \leq \epsilon/3$) and  $(1-\epsilon) \LL^{\dagger}_{uu} \leq\|\ZZ \ee_{u} \|^{2}\leq(1+\epsilon) \LL^{\dagger}_{uu}$, the above formula directly leads to~\eqref{EE20}.
On the other hand, if~\eqref{EE21} holds, then one has $\|\ZZtil\ee_u\| \le (1+\frac{\epsilon}{3})\norm{\ZZ\ee_u}$. In other words,
\begin{equation*}
\abs{\norm{\ZZ\ee_u}+\|\ZZtil\ee_u\|} \le \left(2+\frac{\epsilon}{3}\right)\norm{\ZZ\ee_u},
\end{equation*}
which leads to~\eqref{EE211}.	

We now prove that~\eqref{EE21} is true. Considering $\ZZ \one = \ZZtil \one=\mathbf{0}$ and making use of  the triangle inequality twice, we obtain
		\begin{align*}
		&\abs{\norm{\ZZ \ee_u} - \norm{\ZZtil \ee_u}}
		\leq \norm{(\ZZ - \ZZtil) \ee_u} 
		\\
		= &
		\norm{(\ZZ - \ZZtil) \kh{\ee_u - \frac{1}{N}\one}} \\
		= & N^{-1}
		\norm{(\ZZ - \ZZtil)
			\sum\nolimits_{v\neq u} \kh{\ee_u - \ee_v}} \\
		\leq &
		N^{-1} \sum\limits_{v\neq u}
		\norm{(\ZZ - \ZZtil) \kh{\ee_u - \ee_v}} .
		\end{align*}
Let $P_{uv}$ denote a  simple path linking vertices $u$ and $v$. Then, 
		\begin{align*}
		&N^{-1} \sum\limits_{v\neq u}
		\norm{(\ZZ - \ZZtil) \kh{\ee_u - \ee_v}}
		\\
		\leq &
		N^{-1} \sum\limits_{v\neq u}
		\sum\limits_{a\sim b \in P_{uv}}
		\norm{(\ZZ - \ZZtil) \kh{\ee_a - \ee_b}} \\
		\leq & \kh{\sum\limits_{v\neq u}
			\sum\limits_{a\sim b \in P_{uv}}
			\norm{(\ZZ - \ZZtil) \kh{\ee_a - \ee_b}}^2 }^{1/2} \\
		\leq & N^{1/2} \kh{\sum\limits_{a\sim b\in E}
			\norm{(\ZZ - \ZZtil) \kh{\ee_a - \ee_b}}^2}^{1/2} \\
		= & N^{1/2} \norm{(\ZZ - \ZZtil) \BB^\top }_F,
		\end{align*}
where the second inequality and the third inequality are derived based on the triangle inequality and Cauchy-Schwarz inequality, respectively. The term $N^{1/2} ||(\ZZ - \ZZtil) \BB^\top||_F$ is evaluated as
		\begin{align*}
		&N^{1/2} \norm{(\ZZ - \ZZtil) \BB^T }_F\\
		\leq &
		N^{1/2} w_{\mathrm{min}}^{-1/2}
		\norm{(\ZZ - \ZZtil) \BB^T \WW^{1/2}}_F \\
		\leq &
		N^{1/2} w_{\mathrm{min}}^{-1/2} \theta
		\norm{\ZZ \BB^T \WW^{1/2}}_F
		\qquad \\
		= & N^{1/2} w_{\mathrm{min}}^{-1/2} \theta
		\kh{
			\sum_{a\sim b \in E} w_{a\sim b}
			\norm{\ZZ (\ee_a - \ee_b)}^2}^{1/2},
		\end{align*}
	where the first inequality is obtained according to the relation $\norm{\WW}_{F} \geq w_{\mathrm{min}}$, and the second inequality is obtained according to~\eqref{EE18}, with $\theta$ being given by~\eqref{EE19}. Applying the relation $\norm{\ZZ (\ee_a - \ee_b)}^2\leq (1+\epsilon) r_{ab}$ and Lemma~\ref{Foster}, $N^{1/2} ||(\ZZ - \ZZtil) \BB^\top||_F$ is further bounded as
\begin{align*}
& N^{1/2} \norm{(\ZZ - \ZZtil) \BB^T }_F\\
\leq & N^{1/2} w_{\mathrm{min}}^{-1/2} \theta
\kh{
	(1 + \eps)
	\sum_{a\sim b \in E} w_{a\sim b} r_{ab}  }^{1/2}
\\
= &
N^{1/2} \theta \kh{\frac{(1 + \eps)(N-1)}{w_{\mathrm{min}}}}^{1/2}.
\end{align*}	

We continue to provide a lower bound of $\norm{\ZZ \ee_u}^2$ as
	\begin{equation}\label{EE22}
	\begin{split}
	&\quad\norm{\ZZ \ee_u}^2
	\geq (1-\epsilon)\ee_u^{\top} \LL^{\dagger} \ee_u \\
	&= (1-\epsilon)\kh{\ee_u - \frac{1}{N}\one}^\top \LL^{\dagger} \kh{\ee_u - \frac{1}{N}\one}\\
	&\geq (1-\epsilon) (\lambda_{N})^{-1} \frac{N-1}{N}
	\geq (1-\epsilon)\frac{(N-1)^2}{N^3 w_{\max}}
	\end{split}
	\end{equation}
	In~\eqref{EE22}, the inequalities are obtained due to the following arguments.  Since $\one$ is  an eigenvector of $\LL^{\dagger}$ corresponding to the unique eigenvalue  $0$, and   $\ee_u-({1}/{N})\one$ is orthogonal to  vector $\one$ obeying $\norm{\ee_u-\frac{1}{N}\one}^2=(N-1)^2/N^2$, then we have $(\ee_u-\frac{1}{N}\one)^{\top} \LL^{\dagger} (\ee_u-\frac{1}{N}\one) \ge \lambda_{N}^{-1} \norm{\ee_u-\frac{1}{N}\one}^2$. Combining the above-obtained results and the $\delta$ value given by (\ref{EE19}), it follows that
	\begin{small}
		\begin{align*}
		&\quad \frac{
			\abs{ \norm{\ZZ \ee_u} -  \norm{\ZZtil  \ee_u}}
		}{
			\norm{\ZZ \ee_u}
		}\\
		&\le
		\theta \kh{\frac{(1 + \eps)N(N-1)}{w_{\mathrm{min}}}}^{1/2}\left(\frac{N^3
			w_{\max}}{(N-1)^2(1-\epsilon)}\right)^{1/2}
			\le \frac{\epsilon}{3},
		\end{align*}
	\end{small}
	which is equivalent to~\eqref{EE21} and thus finishes the proof.
\end{proof}

Lemma~\ref{lem:error2} leads to the following theorem.

\begin{theorem}
	\label{TheoAlg2}
There is a $\Otil(M\log{c}/\epsilon^2)$ time algorithm, which  inputs $0<\epsilon<1$ and $\calG=(V,E,w)$ where $c=\frac{w_{\max}}{w_{\min}}$, and returns a $(24\log N/\epsilon^2)\times N$ matrix $\ZZtil$ such that with probability at least $1-1/N$,
	\begin{align}
	(1-\epsilon)^2 \LL^{\dagger}_{uu} \leq \|\ZZtil \ee_u\|^2 \leq (1+\epsilon)^2 \LL^{\dagger}_{uu}\nonumber
	\end{align}
	for any node $u \in V$.
\end{theorem}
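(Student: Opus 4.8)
The plan is to combine the three ingredients already assembled in this section—the identity $\LL^{\dagger}_{uu}=\norm{\WW^{1/2}\BB\LL^{\dagger}\ee_u}^2$, the Johnson--Lindenstrauss lemma (Lemma~\ref{lemma:JL}), and the nearly-linear-time solver (Lemma~\ref{lemma:ST})—and then to invoke Lemma~\ref{lem:error2} to glue together the two resulting sources of error. Concretely, the algorithm first draws a random $\pm 1/\sqrt{k}$ matrix $\QQ$ of size $k\times M$ with $k=24\log N/\eps^2$, computes the $k\times N$ matrix $\QQ\WW^{1/2}\BB$ by sparse multiplication, and then, instead of forming $\ZZ=\QQ\WW^{1/2}\BB\LL^{\dagger}$ explicitly (which would require $\LL^{\dagger}$), recovers each row of $\ZZ$ approximately. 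Writing $\qq_i^\top$ for the $i$-th row of $\QQ\WW^{1/2}\BB$, symmetry of $\LL^{\dagger}$ gives $\zz_i=\LL^{\dagger}\qq_i$, so the algorithm sets $\tilde{\zz}_i=\mathtt{LaplSolve}(\LL,\qq_i,\theta)$ for a tolerance $\theta$ fixed below, and outputs $\ZZtil$ with rows $\tilde{\zz}_i^\top$.

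For correctness I would first verify the two hypotheses of Lemma~\ref{lem:error2}. Applying Lemma~\ref{lemma:JL} to the $N$ vectors $\vv_u=\WW^{1/2}\BB\LL^{\dagger}\ee_u$ (together with the zero vector) shows that, with probability at least $1-1/N$, the map $\vv\mapsto\QQ\vv$ preserves both the individual norms $\norm{\vv_u}^2=\LL^{\dagger}_{uu}$ and the pairwise differences $\norm{\WW^{1/2}\BB\LL^{\dagger}(\ee_u-\ee_v)}^2$ up to factors $(1\pm\eps)$; since $\ZZ\ee_u=\QQ\vv_u$, this is exactly the pair of relations required by Lemma~\ref{lem:error2}. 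Next, the guarantee of Lemma~\ref{lemma:ST} states $\norm{\tilde{\zz}_i-\LL^{\dagger}\qq_i}_{\LL}\le\theta\norm{\LL^{\dagger}\qq_i}_{\LL}$, which is precisely~\eqref{EE18} once we choose $\theta$ equal to the $\delta$ prescribed by~\eqref{EE19}. With both hypotheses in force, Lemma~\ref{lem:error2} yields~\eqref{EE20}, i.e.\ the claimed two-sided bound on $\norm{\ZZtil\ee_u}^2$, and the only randomness entering the argument is that of $\QQ$, so the $1-1/N$ success probability carries over verbatim.

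For the running time I would account for the two phases separately. Forming $\QQ\WW^{1/2}\BB$ costs $O(Mk)=\Otil(M/\eps^2)$, since $\BB$ has $2M$ nonzeros and $\WW^{1/2}$ is diagonal. The $k$ calls to $\mathtt{LaplSolve}$ dominate, each running in expected time $\Otil(M\log(1/\theta))$. Substituting the value of $\theta$ from~\eqref{EE19} gives $1/\theta=\Theta\kh{\eps^{-1}N^{3/2}\sqrt{c}}$ up to constants, so $\log(1/\theta)=O(\log N+\log c+\log(1/\eps))$; absorbing the $\mathrm{poly}(\log N)$ and $\log(1/\eps)$ terms into $\Otil(\cdot)$ leaves an explicit factor $\log c$. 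Multiplying by $k=\Otil(1/\eps^2)$ yields total expected time $\Otil(M\log c/\eps^2)$, matching the stated bound.

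The routine part here is the gluing: once Lemma~\ref{lem:error2} is available, establishing~\eqref{EE20} reduces to checking that each of its hypotheses is supplied by one of the earlier lemmas. The main obstacle I anticipate is the running-time bookkeeping, specifically confirming that the solver tolerance $\theta$—which by~\eqref{EE19} shrinks polynomially in $N$ and in $c=w_{\max}/w_{\min}$—enters the cost only through $\log(1/\theta)$ and hence contributes merely a $\log c$ factor, with the polynomial $N$-dependence and the $\log(1/\eps)$ dependence both safely hidden inside $\Otil$. A secondary point requiring care is that Lemma~\ref{lemma:JL} must be applied so as to preserve single-vector norms as well as pairwise differences, which is why the zero vector is included among the inputs when invoking it.
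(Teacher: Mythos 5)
Your proposal is correct and takes essentially the same route as the paper: the paper obtains Theorem~\ref{TheoAlg2} directly from Lemma~\ref{lem:error2}, whose two hypotheses are supplied exactly as you describe---Lemma~\ref{lemma:JL} applied to the vectors $\WW^{1/2}\BB\LL^{\dagger}\ee_u$ (so that $\ZZ\ee_u=\QQ\WW^{1/2}\BB\LL^{\dagger}\ee_u$ preserves both single norms and pairwise differences) and Lemma~\ref{lemma:ST} with tolerance set by~\eqref{EE19}---and this is precisely the construction implemented in Algorithm~\ref{ALG02}. Your running-time bookkeeping, in particular that the tolerance enters only through $\log(1/\theta)=O(\log N+\log c+\log(1/\eps))$ across the $k=O(\log N/\eps^2)$ solver calls, matches the paper's stated $\Otil(M\log c/\eps^2)$ bound.
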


Based on Theorem~\ref{TheoAlg2} and~\eqref{Kirchhoff01}, we provide a randomized algorithm $\mathtt{Approx\mathcal{D}}$ to approximately compute $\LL^{\dagger}_{uu}$ for all  nodes  $u \in V$ and the Kirchhoff index $\theta$ for a general graph  $\calG=(V,E,w)$, the pseudocode of which is shown in Algorithm~\ref{ALG02}.


\begin{algorithm2e}\label{ALG02}
	\caption{$\text{$\mathtt{Approx\mathcal{D}}$}(\calG, \epsilon)$}
	\Input{
		$\calG$: a connected undirected weighted graph. \\
		$\epsilon$: an approximation parameter 
	}
	\Output{
		$\tilde{\calL^{\dagger}}=\{u,\tilde{\LL}^{\dagger}_{uu}| u \in V\}$: $\tilde{\LL}^{\dagger}_{uu}$ is an approximation of the diagonal element $\LL^{\dagger}_{uu}$ associated with vertex $u$;
		$\tilde{\delta}$: approximation of the Kirchhoff index $\delta$
	}
	$\LL=$ Laplacian matrix of $\calG$,\,\\
	Construct a matrix $\QQ_{k \times M}$,  where $k=\lceil 24\log N/\epsilon^2 \rceil$ and each entry of $\QQ_{k \times M}$ is $\pm 1/\sqrt{k}$ with identical probability\;
	\For{$i=1$ to $k$}{
		 $\qq_i^\top $=the $i$-th row of  $\QQ_{k \times M} \WW^{1/2} \BB$  \\
		$\tilde{\zz}_i=\mathtt{LaplSolve}(\LL, \qq_i, \delta)$ where parameter $\delta$ is given by~(\ref{EE19}) \\	
	}
	\For{each $u\in V$}{
		$\tilde{\LL}^{\dagger}_{uu} = \|\ZZtil_{:,u}\|^2$
	}
	$\tilde{\delta}=N\sum_{u \in V} \tilde{\LL}^{\dagger}_{uu}$\;
	\Return $\tilde{\calL^{\dagger}}=\{u,\tilde{\LL}^{\dagger}_{uu}| u \in V\}$ and $\tilde{\delta}$
\end{algorithm2e}


\section{Experiment Results on Real-World Networks}

In this section, we experimentally evaluate the efficiency and accuracy of our approximation algorithm on real networks. Here we only consider the diagonal entries of pseudoinverse for Laplacian matrix, excluding the Kirchhoff index, since this is enough to reach our goal.

We evaluate the algorithm $\mathtt{Approx\mathcal{D}}$ on a large set of  real-world networks from different domains.  The  data of these networks are taken from the Koblenz Network Collection~\cite{Ku13}. We run our experiments on the largest connected components (LCC) of these networks, related information of which  is shown  in Table~\ref{tab:network_size}.

\begin{table}
		\centering
		\normalsize
		\tabcolsep=8pt
		\fontsize{7}{8}\selectfont
			\caption{Statistics of datasets. For a network with $N$ nodes and $M$ edges, we denote the number of nodes and edges in its largest connected component by $N'$ and $M'$, respectively.}
			\label{tab:network_size}
		\begin{threeparttable}
            \begin{tabularx}{\linewidth}{m{1.5cm}<{\centering}m{1.1cm}<{\centering}m{1.1cm}<{\centering}m{1.1cm}<{\centering}m{1.1cm}<{\centering}}
				\Xhline{2\arrayrulewidth}
				\specialrule{0em}{1.5pt}{1pt}
				Network & $N$ & $M$ & $N'$ & $M'$ \cr
				\midrule
				Jazz musicians      & 198  & 2,742 & 198 & 2,742\cr
				Chicago             & 1,467 & 1,298 & 823 & 822 \cr
				Hamster full	    & 2,426 & 16,631 & 2,000 & 16,098 \cr
				Facebook            & 4,039 & 88,234 & 4,039 & 88,234\cr
				CA-GrQc             & 5,242 & 14,496 & 4,158 & 13,422 \cr
				Reactome            & 6,327 & 147,547 & 5,973 & 145,778 \cr
				Route views         & 6,474 & 13,895  & 6,474 & 12,572 \cr
				PGP                 & 10,680 & 24,316 & 10,680 & 24,316 \cr 
				CA-HepPh      	    & 12,008 & 118,521 & 11,204 & 117,619 \cr
				Astro-ph      	    & 18,772 & 198,110 & 17,903 & 196,972 \cr
				CAIDA               & 26,475 & 53,381 & 26,475 & 53,381\cr
				Brightkite          & 58,228 & 214,078 & 56,739 & 212,945 \cr
				Livemocha*          & 104,103 & 2,193,083 & 104,103& 2,193,083\cr
				WordNet*            & 146,005 & 656,999 & 145,145 & 656,230 \cr
				Gowalla*            & 196,591 & 950,327 & 196,591& 950,327 \cr
				com-DBLP*           & 317,080 & 1,049,866 & 317,080 & 1,049,866 \cr
				Amazon*             & 334,863 & 925,872 & 334,863 & 925,872 \cr
				Pennsylvania*       & 1,088,092 & 1,541,898 & 1,087,562 & 1,541,514 \cr
				roadNet-TX*         & 1,379,917 & 1,921,660 & 1,351,137 & 1,879,201 \cr
				\Xhline{2\arrayrulewidth}
			\end{tabularx}
		\end{threeparttable}
	\end{table}

We run all the experiments on a Linux box with an \textit{Intel i7-7700K @ 4.2-GHz (4 Cores)} and with \textit{32GB} memory. We implement the algorithm $\mathtt{Approx\mathcal{D}}$ in \textit{Julia v0.6.0}, where  the $\mathtt{LaplSolve}$ is from~\cite{RaSu16}, the \textit{Julia language} implementation of which is accessible on the website\footnote{http://danspielman.github.io/Laplacians.jl/latest/}.

\begin{table}
	\centering
	\tabcolsep=3.2pt
	\small
	\fontsize{7}{8}\selectfont
	\caption{The running time (seconds, $s$) of $\mathtt{Exact\mathcal{D}}$ and $\mathtt{Approx\mathcal{D}}$ with various $\epsilon$ on  real-world networks}
		\label{tab:runtime_comparison}
	\begin{threeparttable}
		\begin{tabular}{cccccccc}
			\Xhline{2\arrayrulewidth}
			\multirow{2}{*}{Network}&
			\multirow{2}{*}{$\mathtt{Exact\mathcal{D}}$}&
			\multicolumn{6}{c}{$\mathtt{Approx\mathcal{D}}$ ($s$) with various $\epsilon$}\cr
			\cmidrule(lr){3-8}
			&  ($s$) & $0.3$ & $0.25$ & $0.2$ & $0.15$ & $0.1$ & $0.05$\cr
			\midrule
			Jazz musicians       & 0.001  & 0.019 & 0.027 & 0.041 & 0.068 & 0.148 & 0.571 \cr
			Chicago              & 0.03   & 0.007 & 0.009 & 0.013 & 0.023 & 0.05  & 0.193\cr
			Hamster full	     & 0.372  & 0.200 & 0.308 & 0.435 & 0.749 & 1.576 & 6.271\cr
			Facebook	     & 2.758  & 1.060 & 1.289 & 2.119 & 3.675 & 7.530 & 30.81 \cr
			CA-GrQc              & 2.997  & 0.336 & 0.386 & 0.571 & 1.174 & 2.336 & 9.410\cr
			Reactome             & 8.598  & 1.876 & 2.443 & 3.700 & 6.424 & 13.45 & 56.13\cr
			Route views          & 10.88  & 0.261 & 0.333 & 0.486 & 0.932 & 2.038 & 7.776\cr
			PGP                  & 46.79  & 0.751 & 0.917 & 1.545 & 2.551 & 5.991 & 22.20\cr
			CA-HepPh      	     & 53.97  & 1.952 & 2.508 & 3.540 & 7.377 & 14.64 & 58.65\cr
			Astro-ph      	     & 216.6  & 4.803 & 5.118 & 7.788 & 14.52 & 28.89 & 129.1\cr
			CAIDA                & 700.8  & 1.703 & 1.746 & 2.723 & 5.118 & 10.40 & 42.16\cr
			Brightkite           & 4415   & 9.241 & 8.268 & 12.61 & 21.88 & 51.51 & 212.8\cr
			Livemocha*           & --     & 62.16 & 79.85 & 130.5 & 207.8 & 482.2 & 1842\cr
			WordNet*             & --     & 23.96 & 32.36 & 46.87 & 82.77 & 183.5 & 793.1\cr
			Gowalla*             & --     & 36.66 & 49.44 & 79.38 & 131.3 & 296.8 & 1241\cr
			com-DBLP*            & --     & 60.97 & 89.68 & 126.9 & 240.3 & 522.2 & 2090\cr
			Amazon*              & --     & 80.01 & 111.0 & 177.5 & 279.9 & 694.3 & 2604\cr
			Pennsylvania*        & --     & 301.7 & 443.2 & 672.8 & 1186 & 2823 & 10402\cr
			roadNet-TX*          & --     & 405.3 & 548.1 & 883.4 & 1658 & 3478 & 14160\cr
			\Xhline{2\arrayrulewidth}
		\end{tabular}
		\end{threeparttable}
\end{table}


To demonstrate the efficiency of our approximation algorithm $\mathtt{Approx\mathcal{D}}$, in Table \ref{tab:runtime_comparison}, we compare the running time of $\mathtt{Approx\mathcal{D}}$ with that of the accurate algorithm called $\mathtt{Exact\mathcal{D}}$ that calculates the diagonal elements of $\LL^{\dagger}$ by  using \eqref{LPlus02}.  
The results show that for  moderate $\epsilon$, $\mathtt{Approx\mathcal{D}}$ is significantly  faster than  $\mathtt{Exact\mathcal{D}}$, especially for large networks.   For the last seven networks with node number ranging from $10^5$ to $10^6$, we cannot run the $\mathtt{Exact\mathcal{D}}$ algorithm due to memory limit and high time cost. In contrast, for these networks, we can approximately compute all diagonal entries of $\LL^{\dagger}$. This further show that   $\mathtt{Approx\mathcal{D}}$ is efficient and scalable, which is suitable for large networks.

In addition to the efficiency, we also evaluate the accuracy of  the approximation algorithm $\mathtt{Approx\mathcal{D}}$. To this end, we compare the approximate results of $\mathtt{Approx\mathcal{D}}$ with the exact results calculated by  $\mathtt{Exact\mathcal{D}}$. In Table~\ref{tab:accuracy}, we provide the mean relative error $\sigma$ of our approximation algorithm, where $\sigma$ is defined as $\sigma=\frac{1}{N}\sum_{u\in V}|\LL^{\dagger}_{uu}-\tilde{\LL}^{\dagger}_{uu}|/{\LL^{\dagger}_{uu}}$. The results show that the actual mean relative errors for all $\epsilon$ and all networks are insignificant, which are magnitudes smaller than the theoretical guarantee. Thus, the  approximation algorithm  $\mathtt{Approx\mathcal{D}}$ leads to very accurate results in practice.

\begin{table}
	\centering
	\small
	\tabcolsep=2.8pt
	\fontsize{7}{6.8}\selectfont
	\begin{threeparttable}
		\caption{Mean relative errors $\sigma$ of $\mathtt{Approx\mathcal{D}}$ with various $\eps$.}
		\label{tab:accuracy}
		\begin{tabular}{ccccc}
			\toprule
			\multirow{2}{*}{network}&
			\multicolumn{4}{c}{Mean relative error $\sigma$ with various $\epsilon$}\cr
			\cmidrule(lr){2-5}
			& $0.3$ & $0.2$ & $0.1$ & $0.05$\cr
			\midrule
			Jazz musicians &$\num{1.15e-1}$  & $\num{8.51e-2}$  & $\num{4.18e-2}$ & $\num{1.82e-2}$\cr
			Chicago        &$\num{7.90e-2}$  & $\num{5.37e-2}$  & $\num{2.87e-2}$ & $\num{1.42e-2}$\cr
			Hamster full   &$\num{8.77e-2}$  & $\num{5.94e-2}$  & $\num{2.97e-2}$ & $\num{1.49e-2}$\cr
			Facebook       &$\num{9.15e-2}$  & $\num{5.81e-2}$  & $\num{2.91e-2}$ & $\num{1.50e-2}$\cr
			CA-GrQc        &$\num{7.98e-2}$  & $\num{5.43e-2}$  & $\num{2.64e-2}$ & $\num{1.33e-2}$\cr
			Reactome       &$\num{8.61e-2}$  & $\num{5.64e-2}$  & $\num{2.86e-2}$ & $\num{1.41e-2}$\cr
			Route views    &$\num{5.78e-2}$  & $\num{3.83e-2}$  & $\num{1.95e-2}$ & $\num{9.71e-3}$\cr
			PGP            &$\num{7.13e-2}$  & $\num{4.71e-2}$  & $\num{2.35e-2}$ & $\num{1.17e-2}$\cr
			CA-HepPh       &$\num{7.73e-2}$  & $\num{5.11e-2}$  & $\num{2.57e-2}$ & $\num{1.29e-2}$\cr
			Astro-ph       &$\num{7.93e-2}$  & $\num{5.26e-2}$  & $\num{2.65e-2}$ & $\num{1.31e-2}$\cr
			CAIDA          &$\num{5.31e-2}$  & $\num{3.53e-2}$  & $\num{1.75e-2}$ & $\num{8.79e-3}$\cr
			Brightkite     &$\num{5.94e-2}$  & $\num{3.93e-2}$  & $\num{1.97e-2}$ & $\num{9.82e-3}$\cr
			\bottomrule
		\end{tabular}
	\end{threeparttable}
\end{table}

\section{Exact Formulas for  Diagonal of Pseudoinverse of Laplacian of Model Networks }
\label{graphs.sec}

For a general graph,  exact expression for  the diagonal  of the pseudoinverse $\LL^{\dagger}$ for its Laplacian is very difficult to obtain. However, for some model networks constructed by an iterative way,   the diagonal of $\LL^{\dagger}$  can be explicitly determined. For example, for the Koch networks~\cite{ZhZhXiChLiGu09} and uniform recursive tree~\cite{LiZh13},  we can derive  exact expressions for this  relevant quantity. In this section, we determine the expressions for  the diagonal of $\LL^{\dagger}$ for Koch networks and uniform recursive tree by exploiting \eqref{EE04}, while in the  next section, we will we use the these results to further evaluate the performance of our approximation algorithm $\mathtt{Approx\mathcal{D}}$.

\subsection{Diagonal of Pseudoinverse of Laplacian for Koch Networks }
\label{graphs.secKoch}

In the subsection, we derive the formula for diagonal elements of pseudoinverse for Laplacian of Koch Networks.

\subsubsection{Network construction and properties}
\label{built.secCon}

The Koch networks  translated from the Koch curves~\cite{Sc65} that  can be applied to design fractal antenna~\cite{BaRoCa00}, are constructed in an iterative way~\cite{ZhZhXiChLiGu09}. Let $\mathcal{K}_{g}$ ($g \geq 0$) denote the Koch networks after $g$ iterations. Initially ($g=0$), $\mathcal{K}_{0}$ consists of a triangle with three
nodes and three edges. For $g\geq 1$, $\mathcal{K}_{g}$ is obtained from $\mathcal{K}_{g-1}$ by
performing the following operations. For each of the three nodes in
every existing triangle in $\mathcal{K}_{g-1}$, we generate two new nodes,   which and their
``mother'' nodes are connected to one another forming a new triangle. Figure~\ref{KochNet} illustrates the growth process of the Koch networks.

\begin{figure}
\centering
\includegraphics[width=2in]{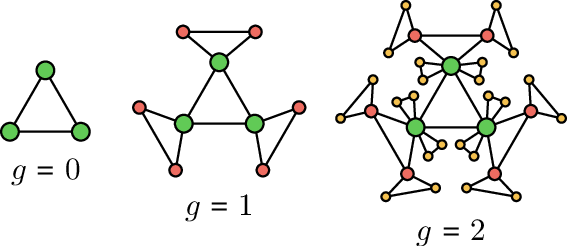}
\caption{Construction process for the Koch network.}
\label{KochNet}
\end{figure}

Let $L_v(g_i)$ be the number of nodes  generated at iteration $g_i$ ($g_i \ge 1$). By construction, $L_v(g_i)=6\times4^{g_i-1}$. In $\mathcal{K}_{g}$, the initial three nodes have the largest degree, we thus call them the hub nodes. Let $N_g$ and $M_g$ denote, respectively, the numbers of nodes and edges  in  $\mathcal{K}_{g}$. It is easy to derive that $N_g=2\times 4^{g}+1$ and  $e_g=3\times 4^{g}$. Then, the average degree of $\mathcal{K}_{g}$ is $(6\times 4^{g})/(2\times
4^{g}+1)$, which is asymptotically equal to $3$ for large $g$.


The Koch networks display some remarkable characteristics as observed in a large variety of real-world systems~\cite{Ne03}. It is scale-free~\cite{BaAl99}, since  the degree distribution $P(k)$ follows a power-law behavior $P(k) \sim k^{-3}$. In addition, the Koch networks   exhibit the small-world effect~\cite{WaSt98}, since the average geodesic distance grows logarithmically with $N_g$ and the average clustering coefficient is high, converging to a nonzero value $0.82$ for large $g$.

\subsubsection{Explicit expression for diagonal of pseudoinverse of  laplacian}
\label{sec:1}

Let $\LL^{\dagger}(g)$ denote the Laplacian matrix of network $\mathcal{K}_{g}$. Let $V_g$ denote the set of the $N_g$ vertices in the Koch network $\mathcal{K}_g$. Let $\LL_{xx}^{\dagger}(g)$ be the diagonal entry   of $\LL^{\dagger}(g)$ associated with node $x$.  Let $r_{xy}(g)$ denote the resistance distance between two vertices $x$ and $y$ in $\mathcal{K}_{g}$.  Then the  resistance distance $r_x(g)$ of node $x$ in $\mathcal{K}_{g}$ is  $r_x(g)=\sum_{y\in V_g}r_{xy}(g)$. And let $\delta(g)$ denote the  Kirchhoff index of  $\mathcal{K}_{g}$.

By~\eqref{EE04}, in order to determine $\LL_{xx}^{\dagger}(g)$, we need first know $\delta(g)$ and  $r_x(g)$.  It was shown~\cite{WuZhCh12} that the Kirchhoff index of  $\mathcal{K}_{g}$ is
\begin{equation}
\label{eqa:Kichoff}
\delta(g)=\frac{2^{4g+1} (6 g+7)+4^{g+1}}{9}\,.
\end{equation}

We now calculate $r_{x}(g)$.  Let $d_{xy}(g)$ be the shortest-path distance between a pair of nodes $x$ and $y$ in $\mathcal{K}_g$. By construction of Koch networks, we have $r_{xy}(g)=2d_{xy}(g)/3$. Define
$d_x(g)=\sum_{y \in V_g} d_{xy}(g)$ as the shortest-path distance of node $x$, which
is the sum of the shortest-path distances from  $x$ to all other nodes in $\mathcal{K}_g$.
Then,
\begin{equation}\label{Hlfxg}
r_{x}(g)=\frac{2}{3}d_x(g)\,.
\end{equation}
Thus, to find $r_{x}(g)$, one can alternatively compute $d_x(g)$.

We next calculate $d_x(g)$ for any node in $\mathcal{K}_g$. For the convenience of description, we distinguish all nodes in $\mathcal{K}_g$, by giving a label sequence to every node, such that the nodes with identical label have identical properties, e.g., degree, resistance distance, and so forth. For this purpose, we
classify the $N_g$ nodes in $\mathcal{K}_g$ into $g+1$ different levels, with the $L_v(k)$ nodes created at the $k$th  iteration belonging to level $k$. By construction, for an arbitrary node $x$ at level $k$, there is a unique shortest path $v_0-v_1-v_2-\cdots-v_{n}-x$ (where $n\leq k-1$), from the nearest hub node $v_0$ at level $0$ to node $x$. We call $\{v_0, v_1, v_2, \cdots, v_{n}\}$ the set of ancestors for node $x$ and $v_{n}$ the parent of node $x$. Hence, each of the three hub nodes at level $0$ is the ancestors of $\frac{N_g}{3}-1$ nodes, and all other vertices are the descendant nodes of one of the three hub vertices at level $0$.

According to the unique shortest path to the closest hub node, we label each vertex at level $k$ by a unique sequence  $\{i_0, i_1, i_2, \cdots, i_{n}, k \}$, where $i_j$ is the level of node $v_j$ on the shortest path. It is easy to see that $i_0=0$ and  $0< i_1< i_2< \cdots< i_{n}< k$.  Although different nodes may have the same labels,  nodes with the same label have the same properties.
According to this labeling, for a node having label (sequence) $\{i_0, i_1, i_2, \cdots, i_{n} \}$, its parent has a label $\{i_0, i_1, i_2, \cdots, i_{n-1} \}$, and its ancestors have a label $\{i_0, i_1, i_2, \cdots, i_{k} \}$, where $k=0, 1, \cdots, n-1$.

As shown above, for any node $x$ in $\mathcal{K}_g$ it has a unique label $\{i_0,i_1\cdots,i_n\}$, where $i_0=0$ and $0< i_1< i_2<\cdots< i_{n}\leq g$, and for any pair of different vertices with identical labels, the value $d_x(g)$, or alternatively represented by $d_{\{i_0,i_1,\cdots,i_n\}}(g)$, for the two nodes is also identical. Based on the particular construction of $\mathcal{K}_g$, we can derive an explicit expression for $d_x(g)$ for any node $x$, as stated in the following lemma.
\begin{lemma}
For a node $x$ in the Koch networks $\mathcal{K}_g$ with label $\{i_0,i_1,\cdots,i_n\}$, its shortest-path distance  is
\begin{equation}\label{Si}
d_{\{i_0,i_1,\cdots,i_n\}}(g)=(g+2)4^{g}+2n\cdot4^g-\sum_{k=1}^{n}2\cdot4^{g-i_k}\,.
\end{equation}
\end{lemma}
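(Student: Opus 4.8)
The plan is to recognize that the target formula~\eqref{Si} is the telescoped form of a simple parent--child recurrence, and then to prove that recurrence together with a base case at a hub. Writing $p$ for the parent (the ``mother'') of $x$ and $\ell=i_n$ for the level of $x$, the formula is equivalent to the two statements
\begin{equation*}
d_{\{0\}}(g)=(g+2)4^{g}, \qquad d_{x}(g)-d_{p}(g)=2\cdot 4^{g}-2\cdot 4^{g-\ell},
\end{equation*}
since summing the increment along the ancestor chain $v_0-v_1-\cdots-v_{n-1}-x$ (with $v_k$ at level $i_k$ and $x=v_n$) produces $(g+2)4^{g}+2n\cdot4^{g}-\sum_{k=1}^{n}2\cdot4^{g-i_k}$. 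So I would reduce everything to (i) a subtree-size count, (ii) the parent--child increment, and (iii) the hub base case.

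The key geometric observation is that $\mathcal{K}_g$ is a ``tree of triangles'': each newly created triangle is glued to the rest of the graph at a single (mother) vertex, so adding vertices never creates a shortcut between pre-existing vertices, and distances among old vertices are preserved from step to step. For the increment, fix $x$ at level $\ell$, its parent $p$, and its sibling $x'$ (the third vertex of the birth triangle $\{p,x,x'\}$). Since $x$ reaches the rest of the graph only through $p$, through $x'$, or through its own descendants, every vertex $y$ falls into exactly one of three classes: the subtree hanging off $x$ (where $d_{xy}=d_{py}-1$); the subtree hanging off the sibling $x'$ (where, because $x$ and $p$ are both adjacent to $x'$, one has $d_{xy}=d_{py}$); and everything else, reached through $p$ (where $d_{xy}=d_{py}+1$). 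Writing $N^{-},N^{0},N^{+}$ for the sizes of these three classes and letting $b=N^{-}=N^{0}$ (the subtrees of $x$ and of $x'$ are isomorphic, both rooted at a level-$\ell$ vertex), we get $d_x(g)-d_p(g)=N^{+}-N^{-}=(N_g-2b)-b=N_g-3b$.

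It therefore remains to count the subtree size $b$. Here I would use self-similarity: every non-hub vertex descends from a unique hub, so $V_g$ is the disjoint union of the three mutually isomorphic hub subtrees, giving subtree size $N_g/3=(2\cdot4^{g}+1)/3$ for a hub; and since a vertex born at level $\ell$ is, like a hub, a corner of a single fresh triangle whose descendants subsequently grow by the same gluing rule, its subtree in $\mathcal{K}_g$ is isomorphic to a hub subtree in $\mathcal{K}_{g-\ell}$, so $b=(2\cdot4^{g-\ell}+1)/3$. Substituting, $d_x(g)-d_p(g)=N_g-3b=(2\cdot4^{g}+1)-(2\cdot4^{g-\ell}+1)=2\cdot4^{g}-2\cdot4^{g-\ell}$, as required.

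For the base case I would induct on $g$ for the hub $h$. Using the tree-of-triangles picture, the new vertices at step $g$ come in pairs attached to each old mother $z$, each at distance $d_{hz}(g-1)+1$, and the number of such pairs at $z$ equals the number of triangles through $z$ in $\mathcal{K}_{g-1}$, namely $2^{(g-1)-\mathrm{level}(z)}$. This forces me to carry the triangle-weighted sum $W_h(g)=\sum_z 2^{\,g-\mathrm{level}(z)}\,d_{hz}(g)$ alongside $d_h$; a short computation yields the coupled recurrences $W_h(g)=4W_h(g-1)+6\cdot4^{g-1}$ and $d_h(g)=d_h(g-1)+2W_h(g-1)+6\cdot4^{g-1}$, which with $d_h(0)=W_h(0)=2$ solve to $d_h(g)=(g+2)4^{g}$ (and may be cross-checked against the known $\delta(g)$ via $\sum_x d_x(g)=3\delta(g)$). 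The main obstacle is the increment step: one must argue carefully that the sibling branch is \emph{exactly} equidistant and that no other shortcuts arise, i.e. that the only nontrivial combinatorics is the subtree count $b$. Once the tree-of-triangles structure and the self-similar subtree size are in hand, the rest telescopes.
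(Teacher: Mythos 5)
Your proposal is correct and takes essentially the same route as the paper: the identical three-way partition of $V_g$ into the subtree of $x$, the subtree of its brother $x'$, and the remainder gives the same parent--child increment $d_x(g)-d_p(g)=N_g-N_{g-i_n}=2\cdot 4^g-2\cdot 4^{g-i_n}$, which is then telescoped along the ancestor chain down to the hub exactly as in the paper's proof. The only substantive difference is the base case $d_{\{0\}}(g)=(g+2)4^g$, which the paper simply cites from~\cite{ZhZhXiChLiGu09} while you derive it from scratch via the coupled recurrences $W_h(g)=4W_h(g-1)+6\cdot 4^{g-1}$ and $d_h(g)=d_h(g-1)+2W_h(g-1)+6\cdot 4^{g-1}$ with $d_h(0)=W_h(0)=2$ --- this derivation checks out, and your explicit cut-vertex justification of the distance trichotomy and the self-similar subtree count $b=N_{g-i_n}/3$ makes rigorous what the paper merely asserts.
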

\begin{proof}
By construction, for any node $x$ with label $\{0, i_1, i_2,\cdots, i_{n} \}$ at level $i_n$, there exists another vertex $x'$, called brother of $x$.  These two brothers are generated simultaneously, both of which and their  parent, denoted $p$, form a triangle. It is easy to see that the label of  $p$ is $\{0, i_1, i_2,\cdots, i_{n-1} \}$. To prove the lemma, we first derive the recursive relation  between $d_{\{0, i_1, i_2,\cdots, i_{n} \}}(g)$ for node $x$ and $d_{\{0, i_1, i_2,\cdots, i_{n-1} \}}(g)$ for its parent $p$.

Note that among the $N_g$ nodes in $\mathcal{K}_g$, $N_{g-i_n}/3$ nodes (including node $x$) are the descendants of $x$, which constitute a set $V^{x}_g$; $N_{g-i_n}/3$ nodes (containing node $x'$) are the descendants of node $x'$, which form a set $V^{x'}_g$; and the remaining $N_g-(2N_{g-i_n}/3)$ nodes form a set $V^{p}_g$. Then, $V_g=V^{x}_g \bigcup V^{x'}_g \bigcup V^{p}_g$ and $V^{x}_g \bigcap V^{x'}_g \bigcap V^{p}_g=\emptyset$. By definition, for a node $y \in V^{x}_g$, $d_{xy}(g)=d_{py}(g)-1$; for a node $j\in V^{x'}_g$, $d_{xy}(g)=d_{py}(g)$; and for a node $y \in V^{p}_g$, $d_{xy}(g)=d_{py}(g)+1$. Thus
\begin{align}
&d_x(g)=d_{\{0, i_1, i_2,\cdots, i_{n} \}}(g)= \sum_{y \in V_g}{d_{xy}(g)}  \nonumber \\
=&\sum_{y \in V^{x}_g}{d_{xy}(g)}+\sum_{y \in V^{x'}_g}{d_{xy}(g)}+\sum_{y \in V^{p}_g}{d_{xy}(g)}  \nonumber \\
=&\sum_{y \in V^{x}_g}(d_{py}(g)-1)+\sum_{y \in V^{x'}_g}{d_{py}(g)}+\sum_{y \in V^{p}_g}(d_{py}(g)+1) \nonumber \\
=&\sum_{y \in V^{x}_g}d_{py}(g)-\frac{1}{3}N_{g-i_n}+\sum_{y \in V^{x'}_g}{d_{py}(g)}+\sum_{j \in V^{p}_g}d_{py}(g)\nonumber\\
&\quad +N_g-\frac{2}{3}N_{g-i_n}  \nonumber \\
=&\sum_{y \in V_g}{d_{py}(g)} +N_g-N_{g-i_n}  \nonumber \\
=&d_p(g)+N_g-N_{g-i_n}  \nonumber \\
=&d_{\{0, i_1, i_2,\cdots, i_{n-1} \}}(g)+2\cdot4^g-2\cdot4^{g-i_n}\,.
\label{RSX}
\end{align}
Repeatedly applying~(\ref{RSX}), we obtain
\begin{align}
 &d_{\{0, i_1, i_2,\cdots, i_{n} \}}(g)\nonumber \\
=&d_{\{0, i_1, i_2,\cdots, i_{n-1} \}}(g)+2\cdot4^g-2\cdot4^{g-i_n}  \nonumber \\
=&d_{\{0, i_1, i_2,\cdots, i_{n-2} \}}(g)+4\cdot4^g-2\cdot4^{g-i_{n-1}}-2\cdot4^{g-i_n}  \nonumber \\
=&\cdots \nonumber \\
=&d_{\{0\}}(g)+2n\cdot4^g-\sum_{k=1}^{n}2\cdot4^{g-i_k}\nonumber \\
=&(g+2)4^{g}+2n\cdot4^g-\sum_{k=1}^{n}2\cdot4^{g-i_k}\,,
\label{ESX}
\end{align}
where $d_{\{0\}}(g)=(g+2)4^{g}$ derived in~\cite{ZhZhXiChLiGu09} was used.
\end{proof}

We note that Lemma~\ref{Si} was previously given in~\cite{YiZhShCh17}, where the proof is omitted.  Then, the resistance distance of a  node in  $\mathcal{K}_g$ can be directly determined by plugging~(\ref{Si}) to~(\ref{Hlfxg}).
\begin{lemma}
\label{lemmaA}
In the Koch networks $\mathcal{K}_g$, the resistance distance of node $x$ labelled by $\{i_0,i_1,\cdots,i_n\}$ is
\begin{equation}
\label{sumR}
r_{x}(g)=\frac{2 \times [(g+2)4^{g}+2n\cdot4^g-\sum_{k=1}^{n}2\cdot4^{g-i_k}]}{3}\,.
\end{equation}
\end{lemma}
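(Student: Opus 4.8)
The plan is to obtain \eqref{sumR} by a direct substitution, since essentially all of the combinatorial work has already been carried out in the preceding lemma establishing the closed form \eqref{Si} for the shortest-path distance $d_{\{i_0,i_1,\cdots,i_n\}}(g)$. The one remaining ingredient is the conversion between shortest-path distance and resistance distance recorded in \eqref{Hlfxg}, namely $r_x(g)=\tfrac{2}{3}d_x(g)$, which in turn rests on the per-pair identity $r_{xy}(g)=\tfrac{2}{3}d_{xy}(g)$.

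First I would justify (or recall) this per-pair identity, which is the only structural fact the proof genuinely relies on. It follows from the block structure of $\mathcal{K}_g$: every edge lies in exactly one of the triangles created during the construction, and distinct triangles are glued only at single (cut) vertices. Consequently the effective resistance between any two nodes decomposes additively over the triangles met along their shortest path (the series law for resistances across cut vertices), and within each triangle the resistance between two of its vertices is the parallel combination of one unit resistor with two unit resistors in series, i.e. $\tfrac{1\cdot 2}{1+2}=\tfrac{2}{3}$. Summing $\tfrac{2}{3}$ over the $d_{xy}(g)$ triangles traversed yields $r_{xy}(g)=\tfrac{2}{3}d_{xy}(g)$, and summing over all $y\in V_g$ recovers \eqref{Hlfxg}.

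With \eqref{Hlfxg} available, I would simply insert the explicit expression \eqref{Si} into $r_x(g)=\tfrac{2}{3}d_x(g)$, which reproduces \eqref{sumR} verbatim. I do not anticipate any real obstacle: the recursive evaluation of $d_x(g)$ was the substantive step and is already finished, so the present lemma is a corollary. The only point meriting care is the resistance-to-distance factor $\tfrac{2}{3}$; once one accepts that along any shortest path $\mathcal{K}_g$ behaves as a series arrangement of triangle blocks, the claim is immediate.
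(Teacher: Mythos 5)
Your proposal is correct and takes essentially the same route as the paper, which likewise obtains \eqref{sumR} by simply plugging the closed form \eqref{Si} into the relation $r_x(g)=\tfrac{2}{3}d_x(g)$ of \eqref{Hlfxg}. The only difference is that you actually justify the per-pair identity $r_{xy}(g)=\tfrac{2}{3}d_{xy}(g)$ via the triangle-block (cut-vertex) decomposition and the series--parallel laws, a step the paper asserts merely ``by construction''; your argument for it is sound, since in $\mathcal{K}_g$ every edge lies in exactly one triangle and a shortest path crosses each traversed triangle through one edge, contributing $\tfrac{2}{3}$ to the effective resistance.
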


Inserting (\ref{eqa:Kichoff}) and (\ref{sumR}) into~\eqref{EE04}, we obtain an exact expression for  $\LL^{\dagger}_{xx}(g)$ associated with any node in  $\mathcal{K}_g$.
\begin{theorem}
For the Koch network $\mathcal{K}_g$, the diagonal entry  $\LL^{\dagger}_{xx}(g)$ corresponding to node $x$ with label $\{i_0,i_1,\cdots,i_n\}$ is
\begin{align}\label{eqa:diagnol}
\LL^{\dagger}_{xx}(g)&=\frac{2}{3(2\cdot4^{ g}+1)^2}\left(2n\cdot4^g-\displaystyle \sum_{k=1}^{n}2\cdot4^{g-i_k}\right) \nonumber \\
&\quad +\frac{ 2^{2g+1}(5 \cdot 4^g+3g+4)}{9 (2\cdot4^{ g}+1)^2}.
\end{align}
\end{theorem}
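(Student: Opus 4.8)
The plan is to derive \eqref{eqa:diagnol} purely by substitution into the resistance-distance identity \eqref{EE04}, since both quantities on its right-hand side are now available in closed form for $\mathcal{K}_g$. No spectral information about $\LL(g)$ is needed: everything is encoded in the nodal resistance distance $r_x(g)$ and the Kirchhoff index $\delta(g)$.

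First I would recast \eqref{EE04} in a convenient form. Combining it with \eqref{Kirchhoff01}, which gives $\trace{\LL^{\dagger}(g)}=\delta(g)/N_g$ where $N_g=2\cdot 4^{g}+1$, produces the master relation
\[
\LL^{\dagger}_{xx}(g)=\frac{N_g\,r_x(g)-\delta(g)}{N_g^{2}}.
\]
Into this I substitute $r_x(g)$ from \eqref{sumR} (equivalently \eqref{Hlfxg} with \eqref{Si}) and $\delta(g)$ from \eqref{eqa:Kichoff}. It is cleanest to split the resistance distance into its node-\emph{independent} part $\tfrac{2}{3}(g+2)4^{g}$ --- the value at a hub, where $n=0$ --- and the node-\emph{dependent} remainder $\tfrac{2}{3}\bigl(2n\cdot 4^{g}-\sum_{k=1}^{n}2\cdot 4^{g-i_k}\bigr)$. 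The node-independent part must be merged with $-\delta(g)$, and the crux of the computation is the polynomial cancellation
\[
6(g+2)4^{g}\bigl(2\cdot 4^{g}+1\bigr)-\bigl[2\cdot 16^{g}(6g+7)+4\cdot 4^{g}\bigr]=2\cdot 4^{g}\bigl(5\cdot 4^{g}+3g+4\bigr),
\]
which after division by $9\bigl(2\cdot 4^{g}+1\bigr)^{2}$ yields exactly the node-independent summand $\tfrac{2^{2g+1}(5\cdot 4^{g}+3g+4)}{9(2\cdot 4^{g}+1)^{2}}$ of \eqref{eqa:diagnol} (using $2\cdot 4^{g}=2^{2g+1}$). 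The node-dependent remainder is then handled separately.

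The only genuinely delicate point --- and the one I would watch most carefully --- is the denominator bookkeeping. The node-dependent remainder enters through $N_g\,r_x(g)/N_g^{2}=r_x(g)/N_g$, so it inherits only a \emph{single} factor of $N_g=2\cdot 4^{g}+1$, giving a contribution $\tfrac{2}{3(2\cdot 4^{g}+1)}\bigl(2n\cdot 4^{g}-\sum_{k=1}^{n}2\cdot 4^{g-i_k}\bigr)$, whereas the node-independent summand carries $N_g^{2}$. To be safe I would confirm the final expression on the smallest nontrivial case $g=1$, where $N_1=9$, $\delta(1)=48$, a hub has $r=8$ and a level-one node (label $\{0,1\}$) has $r=12$; these give $\LL^{\dagger}=\tfrac{8}{27}$ and $\tfrac{20}{27}$ respectively, which the closed form must reproduce. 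This sanity check simultaneously validates the cancellation identity and pins down the correct power of $N_g$ in each term.
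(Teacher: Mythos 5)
Your proposal is methodologically identical to the paper's proof, which consists of exactly one step: inserting \eqref{eqa:Kichoff} and \eqref{sumR} into \eqref{EE04} (with $\trace{\LL^{\dagger}(g)}=\delta(g)/N_g$ from \eqref{Kirchhoff01}). Your cancellation identity is correct: at $g=1$ it reads $648-432=216=8\cdot 27$, and in general $12(g+2)-2(6g+7)=10$ gives the coefficient $10\cdot 16^{g}$, so the node-independent summand $\frac{2^{2g+1}(5\cdot 4^{g}+3g+4)}{9(2\cdot 4^{g}+1)^{2}}$ comes out exactly as in \eqref{eqa:diagnol}. More importantly, the point you flagged as delicate is precisely where your (correct) computation diverges from the printed theorem: since the node-dependent part of $r_x(g)$ enters as $r_x(g)/N_g$, its coefficient is $\frac{2}{3(2\cdot 4^{g}+1)}$, with a \emph{single} factor of $N_g=2\cdot 4^{g}+1$, whereas \eqref{eqa:diagnol} as printed carries $(2\cdot 4^{g}+1)^{2}$. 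Your $g=1$ check settles this: for a level-one node, $\LL^{\dagger}_{xx}(1)=\frac{r_x(1)-\delta(1)/N_1}{N_1}=\frac{12-16/3}{9}=\frac{20}{27}$, which your single-factor formula reproduces, while the printed formula yields $\frac{8}{27}+\frac{2\cdot 6}{3\cdot 81}=\frac{28}{81}\neq\frac{20}{27}$; the printed version also violates the trace identity, giving $3\cdot\frac{24}{81}+6\cdot\frac{28}{81}=\frac{240}{81}$ instead of $\delta(1)/N_1=\frac{16}{3}=\frac{144}{81}\cdot 3$. So your derivation is sound and follows the paper's route exactly, and your denominator bookkeeping exposes a typo in the theorem as stated: the exponent $2$ in the first denominator of \eqref{eqa:diagnol} should be $1$ (equivalently, the numerator of that term is missing a factor $2\cdot 4^{g}+1$).
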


\subsubsection{Ordering of nodes by   diagonal of pseudoinverse for  laplacian}
\label{sec:Ordering1}

Besides the explicit expression for the  diagonal entries of pseudoinverse for  Laplacian matrix, we can also compare the  diagonal entries for any pair of nodes in  $\mathcal{K}_g$.
\begin{theorem}
\label{the:compare}
For two different nodes $x_i$ and $x_j$ in $\mathcal{K}_g$, with their labels being, respectively,  $\{i_0,i_1,\cdots,i_{n_i}\}$ and $\{j_0,j_1,\cdots,j_{n_j}\}$ where $i_0=j_0=0$,
\begin{enumerate}
\item if $n_i>n_j$, then $\LL^{\dagger}_{x_ix_i}(g)>\LL^{\dagger}_{x_jx_j}(g)$;
\item if $n_i<n_j$, then $\LL^{\dagger}_{x_ix_i}(g)<\LL^{\dagger}_{x_jx_j}(g)$;
\item if $n_i=n_j$, and
\begin{enumerate}
\item if there exists a natural integer $z$ with $0\leqslant z\leqslant n_i$, such that $i_l=j_l$ for $l=0,1,2,\cdots,z-1$, but $i_z\neq j_z$,
\begin{enumerate}
\item if $i_z>j_z$, then $\LL^{\dagger}_{x_ix_i}(g)>\LL^{\dagger}_{x_jx_j}(g)$;
\item if $i_z<j_z$, then $\LL^{\dagger}_{x_ix_i}(g)<\LL^{\dagger}_{x_jx_j}(g)$;
\end{enumerate}
\item if $i_l=j_l$ for $l=0,1,2,\cdots,n_i$, then $\LL^{\dagger}_{x_ix_i}(g)=\LL^{\dagger}_{x_jx_j}(g)$.
\end{enumerate}
\end{enumerate}
\end{theorem}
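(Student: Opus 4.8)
The plan is to reduce the comparison of $\LL^{\dagger}_{x_ix_i}(g)$ and $\LL^{\dagger}_{x_jx_j}(g)$ to the comparison of a single scalar attached to each label, and then to resolve the three cases by a base-$4$ ``digit'' analysis. First I would observe that in~\eqref{eqa:diagnol} the second summand $\frac{2^{2g+1}(5\cdot4^g+3g+4)}{9(2\cdot4^g+1)^2}$ depends only on $g$ and not on the chosen node, while the prefactor $\frac{2}{3(2\cdot4^{g}+1)^2}$ of the first summand is strictly positive. Pulling the further common positive factor $2\cdot4^{g}$ out of the parenthesis, one sees that the ordering of $\LL^{\dagger}_{xx}(g)$ over the nodes of $\mathcal{K}_g$ coincides with the ordering of the quantity $F(x):=n-\sum_{k=1}^{n}4^{-i_k}$, where $\{i_0,i_1,\cdots,i_n\}$ is the label of $x$. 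Hence it suffices to compare $F(x_i)$ with $F(x_j)$.

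Next I would establish the key estimate trapping $F$ near the integer $n$. Since $0<i_1<i_2<\cdots<i_n$ are positive integers we have $i_k\ge k$, so $\sum_{k=1}^{n}4^{-i_k}\le\sum_{k=1}^{n}4^{-k}<\tfrac13$, and therefore $n-\tfrac13<F(x)\le n$ for every label. This single inequality disposes of the first two cases: if $n_i>n_j$ then, using integrality, $F(x_i)>n_i-\tfrac13\ge n_j+\tfrac23>n_j\ge F(x_j)$, whence $\LL^{\dagger}_{x_ix_i}(g)>\LL^{\dagger}_{x_jx_j}(g)$; the case $n_i<n_j$ follows by exchanging the roles of $x_i$ and $x_j$.

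For the remaining case $n_i=n_j=n$ I would write $F(x_i)-F(x_j)=\sum_{k=1}^{n}\bigl(4^{-j_k}-4^{-i_k}\bigr)$. If the two labels coincide entrywise the difference is zero, settling item~3(b). Otherwise let $z\ge1$ be the first index at which $i_z\neq j_z$ (the value $z=0$ is impossible since $i_0=j_0=0$); the terms with $k<z$ cancel, so the sign of $F(x_i)-F(x_j)$ equals the sign of $\sum_{k=z}^{n}\bigl(4^{-j_k}-4^{-i_k}\bigr)$. The heart of the argument, and the step I expect to be the main obstacle, is a lexicographic domination showing that the most significant differing ``digit'' controls this sign. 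Concretely, if $i_z>j_z$, so $i_z\ge j_z+1$, the single $j$-term $4^{-j_z}$ outweighs the whole $i$-tail:
\[
\sum_{k=z}^{n}4^{-i_k}\le\sum_{m=i_z}^{\infty}4^{-m}=\tfrac43\,4^{-i_z}\le\tfrac13\,4^{-j_z}<4^{-j_z}\le\sum_{k=z}^{n}4^{-j_k},
\]
where the first bound uses that $i_z<i_{z+1}<\cdots<i_n$ are distinct integers $\ge i_z$, the second uses $i_z\ge j_z+1$, and the last picks out one nonnegative term.

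Consequently $\sum_{k=z}^{n}4^{-i_k}<\sum_{k=z}^{n}4^{-j_k}$, so $F(x_i)>F(x_j)$ and $\LL^{\dagger}_{x_ix_i}(g)>\LL^{\dagger}_{x_jx_j}(g)$; the subcase $i_z<j_z$ is symmetric. The only delicate point throughout is this geometric-tail estimate: once it is in place, every case reduces to comparing the leading discrepancy, exactly as one compares two reals through the first differing digit of their base-$4$ expansions.
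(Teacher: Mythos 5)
Your proof is correct and takes essentially the same approach as the paper's: both reduce the comparison to a single label-dependent scalar (your $F(x)=n-\sum_{k=1}^{n}4^{-i_k}$ is a monotone affine rescaling of the paper's $d_x(g)$ from \eqref{Si}, reached there via \eqref{EE04} rather than directly from \eqref{eqa:diagnol}), and both settle the three cases by the same geometric-tail domination argument in which the first differing level outweighs everything after it. The trap $n-\tfrac{1}{3}<F(x)\le n$ and the bound $\sum_{m=i_z}^{\infty}4^{-m}=\tfrac{4}{3}\,4^{-i_z}\le\tfrac{1}{3}\,4^{-j_z}$ are just normalized versions of the paper's unnormalized estimates, so the difference is cosmetic.
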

\begin{proof}
Equality \eqref{EE04} implies that  the ordering of nodes by $\LL^{\dagger}_{xx}$ is consistent with the ordering of nodes by resistance distance $r_{x}(g)$. On the other hand,  the ordering of nodes by resistance distance $r_{x}(g)$ agrees with the the ordering of nodes by shortest-path distance $d_{x}(g)$. For any pair of nodes $x_i$ and $x_j$, in order to compare $\LL^{\dagger}_{x_ix_i}(g)$ and $\LL^{\dagger}_{x_jx_j}(g)$, we can alternatively compare $d_{x_i}(g)$ and $d_{x_j}(g)$.

According to~(\ref{Si}), we have
\begin{equation}
\begin{split}
& \quad d_{x_i}(g)-d_{x_j}(g)\\
&=d_{\{0, i_1, i_2,\cdots, i_{n_i} \}}(g)-d_{\{0, j_1, j_2,\cdots, j_{n_j} \}}(g)\\
&=2\cdot 4^g(n_i-n_j)-\sum_{k=1}^{n_i}2\cdot4^{g-i_k}+\sum_{k=1}^{n_j}2\cdot4^{g-j_k}.
\end{split}
\end{equation}
We can compare $d_{x_i}(g)$ and $d_{x_j}(g)$ by distinguishing three cases.
\begin{enumerate}
\item Case I: $n_i>n_j$. In this case, it is obvious that $n_i-n_j\geqslant 1$. Moreover, $g\geqslant n_i> n_j\geqslant 1$. Then, we get
\begin{equation}
\begin{split}
& \quad d_{x_i}(g)-d_{x_j}(g)\\
&=2\cdot 4^g(n_i-n_j)-\sum_{k=1}^{n_i}2\cdot4^{g-i_k}+\sum_{k=1}^{n_j}2\cdot4^{g-j_k}\\
&>2\cdot 4^g-2\,\sum_{k=0}^{g-1} 4^{k}=2\cdot 4^g-\frac{2}{3}(4^g-1)>0\,,
\end{split}
\end{equation}
which means $d_{x_i}(g)>d_{x_j}(g)$ and thus  $\LL^{\dagger}_{x_ix_i}(g)>\LL^{\dagger}_{x_jx_j}(g)$.
\item Case II: $n_i<n_j$. For this case, we can prove  $\LL^{\dagger}_{x_ix_i}(g)<\LL^{\dagger}_{x_jx_j}(g)$, by using a process similar to the first case $n_i>n_j$.
\item Case III: $n_i=n_j$.
For this particular case, we distinguish two subcases.
\begin{enumerate}
\item  If there exists a positive integer $z$ ($0< z\leqslant n_i$), such that $i_l=j_l$ for $l=0,1,2,\cdots,z-1$, but $i_{z}\neq j_{z}$, then
\begin{equation}
\begin{split}
& \quad d_{x_i}(g)-d_{x_j}(g)\\
&=2\cdot 4^g(n_i-n_j)-\sum_{k=1}^{n_i}2\cdot4^{g-i_k}+\sum_{k=1}^{n_j}2\cdot4^{g-j_k}\\
&=2\left(-\sum_{k=z}^{n_i}4^{g-i_k}+\sum_{k=z}^{n_j}4^{g-j_k}\right).
\end{split}
\end{equation}
\begin{enumerate}
\item If $i_{z} >j_{z}$, then
\begin{equation}
\begin{split}
& \quad d_{x_i}(g)-d_{x_j}(g)\\
&\geqslant 6\cdot4^{g-i_{z}}-2\sum_{k=0}^{g-i_{z}-1} 4^k >\frac{16}{3}\cdot 4^{g-i_{z}}>0\,,
\end{split}
\end{equation}
which implies  $\LL^{\dagger}_{x_ix_i}(g)>\LL^{\dagger}_{x_jx_j}(g)$.
\item If $i_{z} <j_{z}$, we can analogously prove that  $\LL^{\dagger}_{x_ix_i}(g)<\LL^{\dagger}_{x_jx_j}(g)$.
\end{enumerate}
\item If $i_l=j_l$ holds for $l=0,1,2,\cdots,n_i$, it is easy to prove that  $\LL^{\dagger}_{x_ix_i}(g)=\LL^{\dagger}_{x_jx_j}(g)$.
\end{enumerate}
\end{enumerate}
This completes the proof.
\end{proof}

Since the diagonal entry of $\LL^{\dagger}(g)$  is consistent with the structural centrality, current flow centrality, and information centrality of nodes in Koch networks $\mathcal{K}_g$, Theorem \ref{the:compare} shows that ranking nodes in $\mathcal{K}_g$
according to the values of diagonal for $\LL^{\dagger}(g)$  captures the information of all the three centralities for all nodes in $\mathcal{K}_g$.

\subsection{Diagonal  of  Pseudoinverse of Laplacian for Uniform Recursive Trees  }

This subsection is devoted to determining the diagonal  entries for the  pseudoinverse of Laplacian for uniform recursive trees~\cite{LiZh13}, which are also constructed in an iterative way. Let $\mathcal{U}_{g}$ ($g \geq 0$) be the networks after $g$ iterations. For $g=0$, $\mathcal{U}_{0}$ contains  an isolated node, called the central node. For $g=1$, $f$ ($f$ is a positive natural number) new nodes are created and linked to the central node to form $\mathcal{U}_{1}$. For $g \ge 1$, $\mathcal{U}_g$ is obtained from $\mathcal{U}_{g - 1}$  by performing the following operations: For each node in $U_{g-1}$,  $f$ new nodes are generated and attached to it.  Figure~\ref{graph1} illustrates the first several iterative constructions of a particular network for  $f=3$.

\begin{figure}[h]
\centering
\includegraphics[width=0.95\linewidth,trim=0 0 0 0]{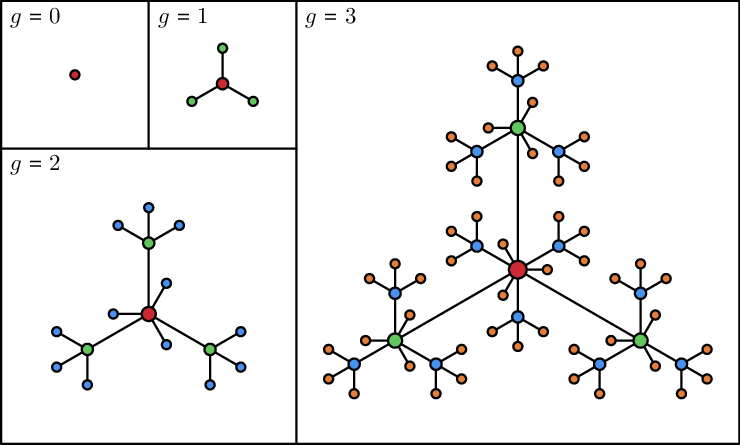}
\caption{ Construction process of a special network for $f=3$.} \label{graph1}
\end{figure}


In the case without inducing confusion, we use the same notations as those for Koch networks.
By construction,  at each iterative step $g_i$ ($g_i \ge 1$), the number of newly generated nodes is $L({g_i}) = f{(f
+ 1)^{{g_i} - 1}}$. Then in network $\mathcal{U}_{g}$, there are $N_g =(f + 1)^g$  nodes and ${E_g} =  {(f + 1)^g} - 1$ edges.
In contrast to Koch networks, the uniform recursive trees are not scale-free, but have an exponential form degree distribution. It has been observed~\cite{AmScBaSt00} that the degree distribution of some real-world networks also decays exponentially, such as power grid. In addition, the uniform recursive trees are small-world with a low average shortest distance.

The special construction of the uniform recursive trees allows to determine exactly relevant quantities.   It was shown~\cite{LiZh13} that the Kirchhoff index of  $\mathcal{U}_{g}$ is
\begin{equation}
\label{eqa:Kichoff2}
\delta(g)=(fg-1)(f+1)^{2g-1}+(f+1)^{g-1}\,.
\end{equation}

We next determine the diagonal  $\LL_{xx}^{\dagger}(g)$ of the pseudoinverse $\LL_{xx}(g)$ for the Laplacian matrix for graph   $\mathcal{U}_{g}$, based on the relation in~\eqref{EE04}. To do so, we need first find  $r_x(g)$
Note that for $\mathcal{U}_{g}$, $r_{ij}(g)=d_{ij}(g)$ for any pair of nodes, due to the tree-like structure.

In order to determine $r_x(g)$, similar to that of Koch networks, we classify the $N_g$ nodes in $\mathcal{U}_{g}$ into $g+1$ different levels, based on which  we provide a label to each node in $\mathcal{U}_{g}$.  For any node $x$ in $\mathcal{U}_{g}$,  we  label it by a sequence of node level information $\{0, i_1, i_2, ..., i_{n} \}$, $0< i_1< i_2< ...< i_{n}\leq g$. For any node $x$ with labeling $\{0, i_1, i_2, ..., i_{n} \}$, we interchangeably use  $x$ or $\{0, i_1, i_2, ..., i_{n} \}$ to represent this node.  For example, for the central node  at level $0$ in $\mathcal{U}_{g}$, its resistance distance  can be written as  $r_{\{0\}}(g)$.

Before determining  $r_x(g)$ for an arbitrary node $x$, we first  we calculate $r_{\{0\}}(g)$. For $g=1$, $r_{\{0\}}(1)=f$. For $g>1$, $r_{\{0\}}(g)$ satisfies the following recursion relation.
\begin{equation}
r_{\{0\}}(g) =r_{\{0\}}(g-1) +f \left(r_{\{0\}}(g-1)+N_{g-1}\right),  \nonumber
\end{equation}
which under the initial condition $r_{\{0\}}(1)=f$ is solved to yield
\begin{equation}\label{r0g}
r_{\{0\}}(g) =g \, f(f+1)^{g-1}.
\end{equation}

We are now in position to   calculate $r_{x}(g)$ for any node $x$ in $\mathcal{U}_{g}$. Note that for any node $x$ with label $\{0, i_1, i_2, ..., i_{n} \}$, $ 0< i_1< i_2< ...< i_{n}\leq g$, its parent, denoted by $p$, has a label  $\{0, i_1, i_2, ..., i_{n-1} \}$. We can derive the  relation between $r_{\{0, i_1, i_2, ..., i_{n} \}}(g)$ and $r_{\{0, i_1, i_2, ..., i_{n-1} \}}(g)$.
Notice  that  the descendants of  node $x$ at level $i_n$ and $x$ itself constitute one subunit of the whole network $\mathcal{U}_{g}$, which is a copy of $\mathcal{U}_{g-i_n}$. The node number  of this subunit is $N_{g-i_n}$. Let $\Omega$ denote the set of nodes in $\mathcal{U}_{g}$,  and let $\Omega_{x}$ denote the set of the descendant nodes of  $x$. Then,  $\Omega=\Omega_{x}\bigcup \overline{\Omega}_{x}$. By construction, for any node $y\in \Omega_{x}$, $r_{xy} (g)=r_{py} (g)-1$. While for any node $y\in \overline{\Omega}_{x}$, $r_{xy} (g)=r_{py} (g)+1$. Thus
\begin{align}
& \quad  r_{\{0, i_1, i_2, ..., i_{n} \}} (g)=r_x (g)= \sum_{y \in \Omega}{r_{xy} (g)}  \nonumber \\
&=\sum_{y \in \Omega_{x}}{r_{xy} (g)}+\sum_{y \in \overline{\Omega}_{x}}{r_{xy} (g)} \nonumber \\
&=\sum_{y \in \Omega_{x}}{(r_{py} (g)-1)}+\sum_{y \in \overline{\Omega}_{x}}{(r_{py} (g)+1)} \nonumber \\
&=\sum_{y \in \Omega_{x}}{r_{py} (g)}-N_{g-i_n}+\sum_{y \in \overline{\Omega}_{x}}{r_{py} (g)}+N_g-N_{g-i_n}  \nonumber \\
&=\sum_{y \in \Omega}{r_{py} (g)}+N_g-2N_{g-i_n}  \nonumber \\
&=r_p (g)+N_g-2N_{g-i_n}  \nonumber \\
&=r_{\{0, i_1, i_2, ..., i_{n-1} \}} (g)+(f+1)^g-2(f+1)^{g-i_n}. \nonumber \\
\label{RSX2}
\end{align}
Using the above relation \eqref{RSX2} repeatedly to give
\begin{align}
& \quad r_{\{0, i_1, i_2, ..., i_{n} \}}(g) \nonumber \\
&=r_{\{0, i_1, i_2, ..., i_{n-1} \}} (g)+\left[(f+1)^g-2(f+1)^{g-i_n}\right]  \nonumber \\
&=r_{\{0, i_1, i_2, ..., i_{n-2} \}}(g)+\left[(f+1)^g-2(f+1)^{g-i_{n-1}}\right]\nonumber\\
&\quad+\left[(f+1)^g-2(f+1)^{g-i_n}\right]  \nonumber \\
&=\cdots \cdots \nonumber \\
&=r_{\{0\}}(g)+\sum_{k=1}^{n} \left[(f+1)^g-2(f+1)^{g-i_k}\right],
\label{ESX2x}
\end{align}
which, together with $r_{\{0\}}(g) =g \, f(f+1)^{g-1}$, is  solved to obtain the following result.
\begin{lemma}
\label{lemmaB}
For a node $x$ in the uniform recursive tree $\mathcal{U}_g$ with label $\{i_0,i_1,\cdots,i_n\}$, its  resistance distance is
\begin{align}
\label{sumR2}
&\quad r_{x}(g)= r_{\{0, i_1, i_2, ..., i_{n} \}}(g) \nonumber \\
&= gf(f+1)^{g-1} + (f+1)^g [n - 2\sum_{k=1}^{n}(f+1)^{-i_k}]\,.
\end{align}
\end{lemma}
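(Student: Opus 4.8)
The plan is to obtain the closed form by unrolling the single-step parent--child recursion down to the root label $\{0\}$ and then substituting the known value of $r_{\{0\}}(g)$. The recursion \eqref{RSX2} and its telescoped form \eqref{ESX2x} are already in hand, so the remaining work is largely bookkeeping. Concretely, I would start from the unrolled identity
\begin{equation*}
r_{\{0,i_1,\ldots,i_n\}}(g)=r_{\{0\}}(g)+\sum_{k=1}^{n}\left[(f+1)^g-2(f+1)^{g-i_k}\right],
\end{equation*}
which follows from applying \eqref{RSX2} exactly $n$ times (equivalently, by induction on the label length $n$).

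First I would justify the single-step recursion, the only step that needs a genuinely structural argument. Using the tree property $r_{ij}(g)=d_{ij}(g)$, shifting the base point from the parent $p$ to its child $x$ at level $i_n$ decreases the distance to every node in the descendant set $\Omega_x$ by exactly $1$ and increases the distance to every node in $\overline{\Omega}_x$ by exactly $1$; this is where acyclicity is used, since $x$ is separated from the rest of the tree only through $p$. Since the subtree hanging off $x$ is a copy of $\mathcal{U}_{g-i_n}$ with $N_{g-i_n}=(f+1)^{g-i_n}$ nodes, summing the per-node shifts gives $r_x(g)=r_p(g)-N_{g-i_n}+(N_g-N_{g-i_n})=r_p(g)+(f+1)^g-2(f+1)^{g-i_n}$, which is \eqref{RSX2}.

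Next I would supply the base case by solving the scalar recursion $r_{\{0\}}(g)=r_{\{0\}}(g-1)+f\bigl(r_{\{0\}}(g-1)+N_{g-1}\bigr)$ with $r_{\{0\}}(1)=f$; this linear recurrence has solution $r_{\{0\}}(g)=gf(f+1)^{g-1}$, which is \eqref{r0g}. Substituting this into the unrolled identity and factoring $(f+1)^g$ out of $\sum_{k=1}^{n}(f+1)^{g-i_k}$ yields
\begin{equation*}
r_x(g)=gf(f+1)^{g-1}+(f+1)^g\left[n-2\sum_{k=1}^{n}(f+1)^{-i_k}\right],
\end{equation*}
which is exactly the claimed \eqref{sumR2}.

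Since all the algebra after the recursion is telescoping and a geometric-sum simplification, the main obstacle is the structural step: correctly identifying $\Omega_x$ as a copy of $\mathcal{U}_{g-i_n}$, getting the node counts right, and verifying the uniform $\pm1$ distance shift across the whole vertex set. Once that is secured, the lemma follows immediately, and it then feeds into \eqref{EE04} to produce the diagonal entries $\LL^{\dagger}_{xx}(g)$.
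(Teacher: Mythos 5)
Your proposal is correct and follows essentially the same route as the paper: the same single-step parent--child recursion \eqref{RSX2} justified by the uniform $\pm 1$ distance shifts over $\Omega_x$ and $\overline{\Omega}_x$ (with $\Omega_x$ a copy of $\mathcal{U}_{g-i_n}$ of size $N_{g-i_n}=(f+1)^{g-i_n}$), telescoped as in \eqref{ESX2x}, and closed with the base case $r_{\{0\}}(g)=gf(f+1)^{g-1}$ from \eqref{r0g}. No gaps; your explicit remark that acyclicity is what guarantees the exact $\pm 1$ shifts is a point the paper leaves implicit.
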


Plugging   (\ref{eqa:Kichoff2}) and (\ref{sumR2}) into~\eqref{EE04} yields an explicit expression for  $\LL^{\dagger}_{xx}(g)$ associated with any node in  $\mathcal{U}_g$.
\begin{theorem}
For  Laplacian matrix  $\LL(g)$ of the uniform recursive tree $\mathcal{U}_g$, the diagonal entry  $\LL^{\dagger}_{xx}(g)$ of its pseudoinverse  $\LL^{\dagger}(g)$ corresponding to node $x$ with label $\{i_0,i_1,\cdots,i_n\}$ is given by
\begin{equation}\label{eqa:diagnol2}
\LL^{\dagger}_{xx}(g)=n-2\sum_{k=1}^{n}(f+1)^{-i_k}+\frac{1}{f+1}-(f+1)^{-(g+1)}.
\end{equation}
\end{theorem}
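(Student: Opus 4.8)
The plan is to substitute the explicit resistance-distance formula from Lemma~\ref{lemmaB} together with the known Kirchhoff index from~\eqref{eqa:Kichoff2} into the fundamental relation~\eqref{EE04}, and then simplify. Recall that~\eqref{EE04} states
\begin{equation*}
\LL^{\dagger}_{xx}(g)=\frac{r_x(g)-\trace{\LL^{\dagger}(g)}}{N_g},
\end{equation*}
where $N_g=(f+1)^g$. Since~\eqref{Kirchhoff01} gives $\delta(g)=N_g\,\trace{\LL^{\dagger}(g)}$, the trace term is simply $\trace{\LL^{\dagger}(g)}=\delta(g)/N_g=\delta(g)(f+1)^{-g}$. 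So the first concrete step is to write
\begin{equation*}
\LL^{\dagger}_{xx}(g)=\frac{r_x(g)}{(f+1)^g}-\frac{\delta(g)}{(f+1)^{2g}}.
\end{equation*}

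Next I would evaluate each of the two pieces separately. For the first piece, dividing the expression for $r_x(g)$ in~\eqref{sumR2} by $(f+1)^g$ yields
\begin{equation*}
\frac{r_x(g)}{(f+1)^g}=\frac{gf}{f+1}+n-2\sum_{k=1}^{n}(f+1)^{-i_k}.
\end{equation*}
For the second piece, I substitute $\delta(g)=(fg-1)(f+1)^{2g-1}+(f+1)^{g-1}$ from~\eqref{eqa:Kichoff2} and divide by $(f+1)^{2g}$, obtaining
\begin{equation*}
\frac{\delta(g)}{(f+1)^{2g}}=\frac{fg-1}{f+1}+(f+1)^{-(g+1)}.
\end{equation*}

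Finally I subtract the second quantity from the first. The $n-2\sum_{k=1}^{n}(f+1)^{-i_k}$ term survives untouched, and the remaining algebra reduces to a routine combination of the constant and $g$-dependent terms: the $gf/(f+1)$ from the first piece cancels against the $fg/(f+1)$ hidden inside $(fg-1)/(f+1)$, leaving $+1/(f+1)$ and the trailing $-(f+1)^{-(g+1)}$, which together reproduce exactly~\eqref{eqa:diagnol2}. There is no genuine obstacle here—the result is a direct consequence of the two cited lemmas—so the only thing to watch is bookkeeping: confirming that the factors of $(f+1)$ in the exponents of the Kirchhoff index are handled correctly when dividing by $(f+1)^{2g}$, and verifying that the linear-in-$g$ terms cancel cleanly so that the final formula is independent of $g$ except through the single term $(f+1)^{-(g+1)}$.
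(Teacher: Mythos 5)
Your proposal is correct and coincides with the paper's own derivation: the paper obtains \eqref{eqa:diagnol2} precisely by plugging the Kirchhoff index~\eqref{eqa:Kichoff2} and the resistance distance~\eqref{sumR2} into the relation~\eqref{EE04} (with $\trace{\LL^{\dagger}(g)}=\delta(g)/N_g$ via~\eqref{Kirchhoff01}), exactly as you do. Your version merely spells out the routine algebra — the cancellation of the $fg/(f+1)$ terms and the exponent bookkeeping in $(f+1)^{-(g+1)}$ — which the paper leaves implicit, and your simplifications check out.
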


In addition to the exact expression for the  diagonal of $\LL^{\dagger}(g)$  for $\mathcal{U}_g$, one can also compare the  diagonal entries between any two nodes in  $\mathcal{U}_g$ with  $f \geq 2$, as stated in the following theorem.
\begin{theorem}
\label{the:compare2}
In the case of  $f \geq 2$, for two different nodes $x_i$ and $x_j$ in the uniform recursive tree $\mathcal{U}_g$, with their labels being, respectively,  $\{i_0,i_1,\cdots,i_{n_i}\}$ and $\{j_0,j_1,\cdots,j_{n_j}\}$ where $i_0=j_0=0$,
\begin{enumerate}
\item if $n_i>n_j$, then $\LL^{\dagger}_{x_ix_i}(g)>\LL^{\dagger}_{x_jx_j}(g)$.
\item if $n_i<n_j$, then  $\LL^{\dagger}_{x_ix_i}(g)<\LL^{\dagger}_{x_jx_j}(g)$.
\item if $n_i=n_j$, and
\begin{enumerate}
\item if there exists a natural integer $z$ with $0\leqslant z\leqslant n_i$, such that $i_l=j_l$ for $l=0,1,2,\cdots,z-1$, but $i_z\neq j_z$,
\begin{enumerate}
\item if $i_z>j_z$, then  $\LL^{\dagger}_{x_ix_i}(g)>\LL^{\dagger}_{x_jx_j}(g)$;
\item if $i_z<j_z$, then  $\LL^{\dagger}_{x_ix_i}(g)<\LL^{\dagger}_{x_jx_j}(g)$;
\end{enumerate}
\item if $i_l=j_l$ for $l=0,1,2,\cdots,n_i$, then  $\LL^{\dagger}_{x_ix_i}(g)=\LL^{\dagger}_{x_jx_j}(g)$.
\end{enumerate}
\end{enumerate}
\end{theorem}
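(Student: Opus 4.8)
The plan is to mirror the proof of Theorem~\ref{the:compare}, but to work directly with the closed form~\eqref{eqa:diagnol2} (equivalently, via~\eqref{EE04} together with Lemma~\ref{lemmaB}, since the ordering of nodes by $\LL^{\dagger}_{xx}(g)$ coincides with the ordering by $r_x(g)$). Reading off $\LL^{\dagger}_{x_ix_i}(g)$ and $\LL^{\dagger}_{x_jx_j}(g)$ from~\eqref{eqa:diagnol2}, the two constant terms $\frac{1}{f+1}-(f+1)^{-(g+1)}$ are identical for every node and therefore cancel, so the whole comparison reduces to comparing
\[
\Phi(x):=n-2\sum_{k=1}^{n}(f+1)^{-i_k}
\]
for the two labels. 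I would then base the entire argument on one elementary estimate: since $0<i_1<i_2<\cdots<i_n$ are distinct positive integers, $\sum_{k=1}^{n}(f+1)^{-i_k}\le\sum_{m=1}^{\infty}(f+1)^{-m}=\frac{1}{f}$, and this partial sum is strictly below $\frac1f$, so that $2\sum_{k=1}^{n}(f+1)^{-i_k}<\frac{2}{f}\le 1$ whenever $f\ge 2$. This ``fractional part stays below $1$'' phenomenon is precisely what makes the integer level count $n$ the dominant term, and it is where the hypothesis $f\ge 2$ is genuinely used.

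For the first two cases I would argue directly. When $n_i\ge n_j+1$, using the bound above on the $i$-sum and the nonnegativity of the $j$-sum,
\[
\Phi(x_i)-\Phi(x_j)=(n_i-n_j)-2\sum_{k=1}^{n_i}(f+1)^{-i_k}+2\sum_{k=1}^{n_j}(f+1)^{-j_k}>(n_i-n_j)-\frac{2}{f}\ge 1-\frac{2}{f}\ge 0,
\]
where the strictness of the geometric estimate secures a strict inequality even at the boundary value $f=2$; hence $\LL^{\dagger}_{x_ix_i}(g)>\LL^{\dagger}_{x_jx_j}(g)$. The case $n_i<n_j$ is entirely symmetric.

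The main work, and the step I expect to be the obstacle, is Case~III with $n_i=n_j=n$. Here one first notes that $i_0=j_0=0$ forces the first differing coordinate $z$ to satisfy $z\ge 1$; after cancelling the equal terms for $l<z$,
\[
\Phi(x_i)-\Phi(x_j)=2\left(\sum_{k=z}^{n}(f+1)^{-j_k}-\sum_{k=z}^{n}(f+1)^{-i_k}\right).
\]
Assuming $i_z>j_z$, hence $i_z\ge j_z+1$, I would dominate the entire tail of the $i$-sum by a geometric series,
\[
\sum_{k=z}^{n}(f+1)^{-i_k}\le\sum_{m=i_z}^{\infty}(f+1)^{-m}=\frac{(f+1)^{-(i_z-1)}}{f}\le\frac{(f+1)^{-j_z}}{f},
\]
while retaining only the leading term $(f+1)^{-j_z}$ of the $j$-sum, which yields $\Phi(x_i)-\Phi(x_j)\ge 2(f+1)^{-j_z}\bigl(1-\tfrac1f\bigr)>0$ for $f\ge 2$, and thus $\LL^{\dagger}_{x_ix_i}(g)>\LL^{\dagger}_{x_jx_j}(g)$; the case $i_z<j_z$ is symmetric, and the remaining subcase $i_l=j_l$ for all $l$ gives identical labels and hence equality. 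The delicate point is exactly this tail estimate: one must check that a single unit of extra depth ($i_z\ge j_z+1$) shrinks the remaining $i$-contribution enough that it cannot overtake the leading $j$-term, which is again the precise place where the threshold $f\ge 2$ cannot be relaxed.
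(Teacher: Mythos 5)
Your proof is correct and follows essentially the same route as the paper's: the paper omits the proof of Theorem~\ref{the:compare2}, stating it is similar to that of Theorem~\ref{the:compare}, and your argument is exactly that analogue---reduce via \eqref{EE04} (equivalently the closed form \eqref{eqa:diagnol2}, whose constant terms cancel) to comparing label-dependent sums, then handle the same three cases by dominating the tails with geometric series in base $(f+1)$. Your explicit isolation of where $f\geq 2$ enters (the bound $2\sum_{k}(f+1)^{-i_k}<\tfrac{2}{f}\leq 1$ and the factor $1-\tfrac{1}{f}>0$, which indeed fail at $f=1$, e.g.\ the labels $\{0,1\}$ and $\{0\}$ then yield equal values) is a worthwhile detail that the omitted proof leaves implicit.
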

We omit the proof of Theorem~\ref{the:compare2}, since it is similar to that for Theorem~\ref{the:compare}.

\section{Numerical Results On Model Networks}

To further demonstrate the performance of our approximation algorithm $\text{Approx}\mathcal{D}$, we use it to compute the  diagonal elements of pseudoinverse of Laplacian matrix for some model networks, as well as their  Kirchhoff index.

\subsection{Diagonal  of Pseudoinverse for Laplacian}

We first apply our algorithm $\text{Approx}\mathcal{D}$ to approximately compute the diagonal elements of pseudoinverse of Laplacian matrix for the Koch network $\mathcal{K}_{10}$ and uniform recursive tree $\mathcal{U}_{11}$ with $f=3$.  Let $V$ be the set of nodes in these two graphs. Table~\ref{tab:DiagModel} reports the maximum relative error $\sigma_{\max}=\max_{u\in V}|\LL^{\dagger}_{uu}-\tilde{\LL}^{\dagger}_{uu}|/{\LL^{\dagger}_{uu}}$,  mean relative error $\sigma=\frac{1}{|V|}\sum_{u\in V}|\LL^{\dagger}_{uu}-\tilde{\LL}^{\dagger}_{uu}|/{\LL^{\dagger}_{uu}}$, and running time of our numerical results.

\begin{table}
	\tabcolsep=5pt
	\centering
	\fontsize{8.4}{8.4}\selectfont
	\begin{threeparttable}
\caption{Maximum relative error, mean relative error, and running time (seconds, $s$) for $\tilde{\calL^{\dagger}}$ on networks $\mathcal{K}_{10}$ and $\mathcal{U}_{11}$ with $f=3$. The diagonal elements of $\LL^{\dagger}(g)$ are obtained via~\eqref{eqa:diagnol} and~\eqref{eqa:diagnol2}, while $\tilde{\calL^{\dagger}}$ is obtained through algorithm $\mathtt{Approx\mathcal{D}}$ with $\epsilon=0.1$.}
		\label{tab:DiagModel}
		\begin{tabular}{cccccc}
			\toprule
			Network & Vertices & Edges & $\sigma_{\max}$ &  $\sigma$ & Time\cr
			\midrule
			\specialrule{0em}{3pt}{3pt}
			$K_{10}$    & 2,097,153 & 3,145,728 & 0.1366 & 0.0225 & 1630\cr
\specialrule{0em}{3pt}{3pt}
			$U_{11}$    & 4,194,304 & 4,194,303 & 0.1401 & 0.0220 &4496\cr
			\specialrule{0em}{3pt}{3pt}
			\bottomrule
		\end{tabular}
	\end{threeparttable}
\end{table}

\subsection{Kirchhoff Index}

We proceed to present our numerical results  for Kirchhoff index of three model networks, including Koch networks, uniform recursive trees,  and  pseudofractal scale-free webs~\cite{DoGoMe02,ShLiZh17,XiZhCo16}. For the first two graphs, the exact results for  Kirchhoff index have been obtained in the previous section, given by~\eqref{eqa:Kichoff} and~\eqref{eqa:Kichoff2}, respectively. Hence we only need to determine the Kirchhoff index for pseudofractal scale-free webs.

Let $\mathcal{F}_g$ ($g \geq 0$) denote the pseudofractal scale-free webs after $g$ iterations. For $g=0$, $ \mathcal{F}_0$ includes a triangle of three nodes and three edges. For $g>0$, $\mathcal{F}_g$ is obtained from $\mathcal{F}_{g-1}$ in the following way. For each existing edge in $\mathcal{F}_{g-1}$, a new node is created and connected to both end nodes of the edge.  Figure~\ref{psfw1} schematically illustrates the construction process of the graph. In  $\mathcal{F}_g$, there are $(3^{g+1}+3)/2$ nodes and $3^{g+1}$ edges. Like the Koch networks,  the pseudofractal scale-free webs also exhibit the striking scale-free small-world properties of many real networks.  It has been shown~\cite{YaKl15,YiZhSt19} that the Kirchhoff index $\delta(g) $ for $\mathcal{F}_g$ is
\begin{align}
\label{Kg01}
\delta(g)&=\frac{1}{112}\cdot\frac{1}{3^{g+2}}\times \nonumber\\
&\quad\big(50\cdot 3^{3g+3}-35\cdot 3^{2g+2}2^{g+1}+48\cdot 3^{2g+2}\nonumber\\
&\quad+30\cdot 3^{g+2}2^{g+1}-14\cdot 3^{g+2}+225\cdot 2^{g+1}\big)\,.
\end{align}

\begin{figure}
\begin{center}
\includegraphics[width=0.6\linewidth]{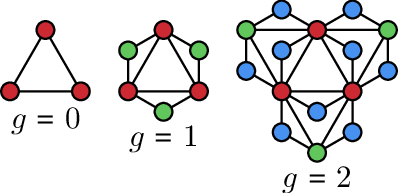}
\caption{Iterative construction process of the pseudofractal scale-free webs. }
\label{psfw1}
\end{center}
\end{figure}

Table~\ref{tab:Kirchoff} reports our numerical results for  Kirchhoff index on three model networks, $\mathcal{K}_{10}$, $\mathcal{U}_{11}$ with $f=3$,  and  $\mathcal{F}_{12}$.  All these  numerical results  are  obtained via algorithm $\text{Approx}\mathcal{D}$.

\begin{table}
	\tabcolsep=1pt
	\centering
	\fontsize{5.5}{4}\selectfont
	\begin{threeparttable}
		\caption{Exact Kirchhoff index $\delta$,   their approximation $\tilde{\delta}$,  relative error $\rho=(\delta-\tilde{\delta})/\delta$, and running time (seconds, $s$) for $\tilde{\delta}$ on three networks,  $\mathcal{K}_{10}$, $\mathcal{U}_{11}$ with $f=3$,  and  $\mathcal{F}_{12}$.  $\delta$ is obtained via~\eqref{eqa:Kichoff},~\eqref{eqa:Kichoff2}, and~\eqref{Kg01}, while $\tilde{\delta}$ is obtained through algorithm $\text{Approx}\mathcal{D}$ with $\epsilon=0.1$.}
		\label{tab:Kirchoff}
		\begin{tabular}{ccccccc}
			\toprule
			Net & Vertices & Edges & $\delta$ & $\tilde{\delta}$ &$\rho$ ($\times 10^{-3}$)& Time\cr
			\midrule
			\specialrule{0em}{3pt}{3pt}
			$K_{10}$            & 2,097,153 & 3,145,728 & 140,737,489,403,904 & 140,679,655,505,316 & 0.4109346 & 4496\cr
			\specialrule{0em}{3pt}{3pt}
			$U_{11}$    	    & 4,194,304 & 4,194,303 & 3,396,122,345,421 & 3,396,602,496,861 & 0.1413823 & 1155\cr
			\specialrule{0em}{3pt}{3pt}
		$\mathcal{F}_{13}$  & 2,097,153   & 3,145,728 & 16,370,506,924,032 & 16,346,238,029,818 & 1.4824766 & 1630\cr
			\specialrule{0em}{3pt}{3pt}
			\bottomrule
		\end{tabular}
	\end{threeparttable}
\end{table}

Tables~\ref{tab:DiagModel} and~\ref{tab:Kirchoff}  show that the approximation algorithm $\text{Approx}\mathcal{D}$ works effectively for all studied model networks. This again demonstrates the advantage of our proposed algorithm for large-scale networks.

\section{Conclusions}

The problem of estimating the diagonal entries of pseudoinverse $\LL^{\dagger}$ of the Laplacian matrix associated with a graph arises in various applications ranging from network science to social networks and control science. In this paper, we presented an inexpensive technique to approximately compute all the diagonal entries of $\LL^{\dagger}$ for a weighted undirected graph. Our method is based on the Johnson-Lindenstrauss Lemma and nearly-linear time solver, and has nearly linear time computation complexity. We analytically demonstrated that the approximation algorithm provides an approximation guarantee with error bounds. Moreover, we determined exactly all the diagonal entries of the pseudoinverse $\LL^{\dagger}(g)$ the Laplacian matrices for two iteratively growing networks, the scale-free small-world Koch networks and the uniform recursive trees. With the help of several recursive relations derived from the particular structure of the two networks, we obtained rigorous expressions for every entry of matrix $\LL^{\dagger}(g)$. We conducted extensive numerical experiments on various real-world networks and three model networks, the results of which demonstrate that the proposed approximation method is both effective and efficient. In future work, we plan to study nearly-linear time algorithm  estimating the diagonal entries of pseudoinverse  for a directed graph Laplacian~\cite{Bo21}, by using the approach of sparse LU factorizations developed in~\cite{CoKeDyPePeRaSi18}.

\bibliographystyle{compj}
\bibliography{Diagnol}

\end{document}